\theoremstyle{definition}
\theoremstyle{plain}
\newtheorem{theo}{Theorem}[section]
\newtheorem{lemma}[theo]{Lemma}
\newtheorem{prop}[theo]{Proposition}
\numberwithin{equation}{section}
\numberwithin{defn}{section}
\numberwithin{note}{section}
\newcommand{\vs}{\vspace{0.5cm}}
\newcommand{\+}{\mspace{1.5mu}}
\renewcommand\widehat[1]{%
\savestack{\tmpbox}{\stretchto{%
    \scaleto{%
        \scalerel*[\widthof{\ensuremath{#1}}]{\kern.1pt\mathchar"0362\kern.1pt}%
        {\rule{0ex}{\textheight}}%WIDTH-LIMITED CIRCUMFLEX
    }{\textheight}% 
}{2.4ex}}%
\stackon[-6.9pt]{#1}{\tmpbox}%
}
\renewcommand\vec[1]{%
    \boldsymbol{#1}
}
\title[Three Bosons with Contact Interactions]{Some Remarks on the Regularized Hamiltonian for Three Bosons with Contact Interactions}
\author[D. Ferretti]{Daniele Ferretti}
\address[D. Ferretti]{Department of Mathematics G. Castelnuovo, University of Rome ``La Sapienza''\\
Piazzale  Aldo Moro, 5, 00185 Rome, Italy}
\email{d.ferretti@uniroma1.it}
\author[A. Teta]{Alessandro Teta}
\address[A. Teta]{Department of Mathematics G. Castelnuovo, University of Rome ``La Sapienza'',\newline
Piazzale  Aldo Moro, 5, 00185 Rome, Italy}
\email{teta@mat.uniroma1.it}
\date{}
\thanks{The authors acknowledge the support of the GNFM Gruppo Nazionale per la Fisica Matematica - INdAM}
\begin{document}

\pagenumbering{Alph}
\makeatletter
\renewcommand\subsection{%
  \@startsection{subsection}%
    {2}% level
    {0em}% indent
    {-1ex \@plus 0.1ex \@minus -0.05ex}% beforeskip
    {-1em \@plus 0.2em}% afterskip
    {\scshape}% style
  }

\makeatother

\begin{abstract}
We discuss some properties of a model Hamiltonian for a system of three bosons interacting via zero-range forces in three dimensions.
In order to avoid the well known instability phenomenon, we consider a regularized Hamiltonian with a repulsive three-body force. 
We review the main result recently obtained in~\cite{BCFT}  where, starting from a suitable quadratic form $Q$, the self-adjoint and bounded from below Hamiltonian $\mathcal H$ is constructed provided that the strength $\gamma$ of the three-body repulsion is larger than a threshold parameter $\gamma_c\+$. 
We also show that the threshold value $\gamma_c$ found in~\cite{BCFT} is optimal, in the sense that the quadratic form $Q$ is unbounded from below if $\gamma<\gamma_c\+$. Finally, we give an alternative and much simpler proof of the result in \cite{BCFT} whenever $\gamma > \gamma'_c\+$, with $\gamma'_c$ strictly larger than $\gamma_c\+$.
\end{abstract}
\maketitle
\begin{footnotesize}
\emph{Keywords: Zero-range interactions; Three-body Hamiltonians; Schr\"odinger operators.} 
 
\emph{MSC 2020: 
    81Q10; %    Self-adjoint operator theory in quantum theory, including spectral analysis
    81Q15; %    Quantum theory - Perturbation theories for operators and differential equations
    %47B25; %   Symmetric and self-adjoint operators (unbounded)
    70F07; %    Mechanics of particles and systems - Three-body problems
    46N50; %    Functional analysis - Applications in quantum physics
    %47A99;%    Operator theory - None of the above, but in this section
    %35B25; %   Partial differential equations - Singular perturbations
    %81V70;%     Quantum theory - Many-body theory; quantum Hall effect
    %35P15;%    Partial differential equations - Estimation of eigenvalues, upper and lower bounds
    %81Q05.%    Quantum theory - Closed and approximate solutions to the Schrödinger, Dirac, Klein-Gordon and other equations of quantum mechanics
}  
\end{footnotesize}

\pagenumbering{arabic}
\vs\vs
\section{Introduction}
\vs
In this note we discuss some properties of a model Hamiltonian describing the dynamics of three identical bosons interacting via zero-range forces in dimension three.
Since the seminal papers by Minlos and Faddeev (\cite{MF, MF2}), it is known that a natural candidate for such Hamiltonian turns out to be  unbounded from below, giving rise to the so-called Thomas effect.
Here natural means that the boundary condition defining the Hamiltonian (known as TMS boundary condition) is the direct generalization to the three-body case of the boundary condition characterizing the Hamiltonian of the two-body problem.
Roughly speaking, the reason of such instability is due to the interaction becoming too singular as all the three particles are close to each other.
We note that this pathology is absent in dimension one, where perturbation theory of quadratic forms can be used, and in dimension two, where the renormalized two-body boundary condition is sufficient to avoid the collapse (see e.g.~\cite{DFT, DR}). 

%\noindent
It is worth to underline that the construction of a self-adjoint and bounded from below Hamiltonian for three, or more, interacting bosons with zero-range forces in dimension three is a challenging open problem in Mathematical Physics.
Following a suggestion contained in~\cite{MF}, it has been recently studied (\cite{FiT}, \cite{Miche}, \cite{BCFT}) a regularized version of the Hamiltonian for a system of three bosons (see also~\cite{FeT} for the case of $N$ bosons interacting with an impurity).
The main idea is to introduce a three-body repulsion that reduces to zero the strength of the contact interaction between two particles if the third particle approaches the common position of the first two.
On the other hand, when the third particle is far enough, the usual two-body point interaction is restored.
The result is that the regularized Hamiltonian is self-adjoint and bounded from below if the strength $\gamma$ of the three-body interaction is larger than a threshold value $\gamma_c\+$. 

The aim of this paper is to describe the construction of such regularized Hamiltonian following the approach developed in~\cite{BCFT} and also to prove two further results.
More precisely, in section~\ref{regham} we introduce the notation and we formulate the main result of~\cite{BCFT}, essentially based on the analysis of a suitable quadratic form $Q$.

\noindent In section~\ref{optgam} we prove that the threshold value $\gamma_c$  obtained in~\cite{BCFT} is optimal, in the sense that for $\gamma < \gamma_c$ the quadratic form $Q$ is unbounded from below.

\noindent In section~\ref{anapos} we give a different proof of the main result in~\cite{BCFT} based on a new  approach in position space.
The proof is surely less general since it is valid only for $\gamma> \gamma'_c\+$, where $\gamma'_c > \gamma_c\+$.
On the other hand it has the advantage to be much simpler and to show that the choice of the three-body force is not arbitrary but it is dictated by the inherent singularity of the problem.

\vs\vs
\section{Regularized Hamiltonian}\label{regham}
\vs

Let us consider a system composed of three identical spinless bosons of mass $\frac{1}{2}$ in three dimensions and let us fix the center of mass reference frame so that $\vec{x}_1$, $\vec{x}_2$ and $\vec{x}_3=-\vec{x}_1-\vec{x}_2$ represent the Cartesian coordinates of the three particles.
We also introduce the Jacobi coordinates
\begin{equation}
    \begin{cases}
        \vec{r}_k=\frac{1}{2}\sum_{i,j=1}^3\epsilon_{ijk}(\vec{x}_i-\vec{x}_j),\\
        \vec{\rho}_k=\frac{3}{2}\,\vec{x}_k-\frac{1}{2}\sum_{\ell=1}^3 \vec{x}_\ell,
    \end{cases}\qquad k\in\{1,2,3\}
\end{equation}
where $\epsilon_{ijk}$ is the Levi-Civita symbol,
so that one has the following identities
\begin{equation}
    \begin{cases}
        \vec{r}_{k\pm 1}=-\frac{1}{2}\vec{r}_k\mp \vec{\rho}_k,\\
        \vec{\rho}_{k\pm 1}=\pm\frac{3}{4}\vec{r}_k-\frac{1}{2} \vec{\rho}_k,
    \end{cases}\qquad k\in\mathbb{Z}\mspace{1.5mu}\diagup_{\mspace{-4.5mu}\{3\}}.
\end{equation}
Denoting by $\vec{x}=\vec{r}_1=\vec{x}_2-\vec{x}_3$ and $\vec{y}=\vec{\rho}_1=\vec{x}_1-\frac{\vec{x}_2\,+\,\vec{x}_3}{2}$, the Hilbert space of the system is
\begin{equation}\label{hilbertSpace}
    L^2_{\mathrm{sym}}(\mathbb{R}^6):=\left\{\psi\in L^2(\mathbb{R}^6)\,\big|\;\psi(\vec{x},\vec{y})=\psi(-\vec{x},\vec{y})=\psi\!\left(\tfrac{1}{2}\,\vec{x}+\vec{y},\tfrac{3}{4}\,\vec{x}-\tfrac{1}{2}\,\vec{y}\right)\right\}\!.
\end{equation}
Indeed, notice that the symmetry conditions in~\eqref{hilbertSpace} corresponds to the exchange of particles $2,3$ and $1,2$ that implies also the condition $\psi(\vec{x},\vec{y})=\psi\!\left(\tfrac{1}{2}\,\vec{x}-\vec{y},-\tfrac{3}{4}\,\vec{x}-\tfrac{1}{2}\,\vec{y}\right)$, associated with the exchange of particles $3,1$.
If the bosons interact via zero-range forces, then the system is described, at least formally, by the Hamiltonian
\begin{equation}\label{formalH}
    \hat{\mathcal{H}}=\mathcal{H}_0+\mu\mspace{1.5mu}\delta(\vec{x})+\mu\mspace{1.5mu}\delta(\tfrac{1}{2}\,\vec{x}+\vec{y})+\mu\mspace{1.5mu}\delta(\tfrac{1}{2}\,\vec{x}-\vec{y})
\end{equation}
where $\mu\!\in\!\mathbb{R}$ is a coupling constant and $\mathcal{H}_0$ is the free Hamiltonian of the system, i.e.
\begin{equation}
    \mathcal{H}_0=-\Delta_{\vec{x}}-\tfrac{3}{4}\,\Delta_{\vec{y}} .
\end{equation}
In order to define a rigorous counterpart of $\hat{\mathcal{H}}$, one needs to build a perturbation of the free Hamiltonian supported on the coincidence hyperplanes
\begin{subequations}
\begin{equation}
    \pi_k:=\left\{(\vec{r}_k,\vec{\rho}_k)\!\in\mathbb{R}^6\,\big|\;\vec{r}_k\!=\vec{0}\right\},\qquad \pi:=\textstyle{\bigcup_{k=1}^3 \pi_k}
\end{equation}
or, equivalently,
\begin{equation}
    \begin{split}
        \pi_1:=\!\left\{(\vec{x},\vec{y})\!\in\mspace{-1.5mu}\mathbb{R}^6\right.&\left.\!\big|\;\vec{x}=\vec{0}\right\}\!,\quad \pi_2:=\!\left\{(\vec{x},\vec{y})\!\in\mspace{-1.5mu}\mathbb{R}^6\,\big|\;\vec{y}=-\tfrac{1}{2}\,\vec{x}\right\}\!,\\
        &\pi_3:=\!\left\{(\vec{x},\vec{y})\!\in\mspace{-1.5mu}\mathbb{R}^6\,\big|\;\vec{y}=\tfrac{1}{2}\,\vec{x}\right\}\!.
    \end{split}
\end{equation}
\end{subequations}
In other words, we look for a self-adjoint and bounded from below extension in $L^2_{\mathrm{sym}}(\mathbb{R}^6)$ of the following symmetric and densely defined operator
\begin{equation}
    \dot{\mathcal{H}}_0:=\mathcal{H}_0\big\rvert_{\mathscr{D}(\dot{\mathcal{H}}_0)},\qquad\mathscr{D}(\dot{\mathcal{H}}_0):=H^2_0(\mathbb{R}^6\setminus\pi)\cap L^2_{\mathrm{sym}}(\mathbb{R}^6)
\end{equation}
that is closed according to the graph norm of $\mathcal{H}_0, \mathscr{D}(\mathcal{H}_0)=H^2(\mathbb{R}^6)\cap L^2_{\mathrm{sym}}(\mathbb{R}^6)$.
In particular, we are interested in the family of self-adjoint extensions studied in~\cite{BCFT} (see also~\cite{Miche}) which, at least formally, are characterized by the boundary condition 
\begin{equation}\label{mfBC}
 \psi(\vec{x},\vec{y})=\frac{\xi(\vec{y})}{\lvert\vec{x}\rvert}+\alpha(\vec{y})\mspace{1.5mu}\xi(\vec{y})+o(1),\qquad \lvert\vec{x}\rvert\rightarrow 0,
\end{equation}
where $\alpha$ is a position dependent parameter given by
\begin{equation}
    \alpha:\vec{y}\longmapsto -\frac{1}{\mathfrak{a}}+\frac{\gamma}{\lvert\vec{y}\rvert} \,\theta(\lvert\vec{y}\rvert)
\end{equation}
with $\mathfrak{a}$ the two-body scattering length, $\gamma$ a positive  parameter representing the strength of the regularization and $\theta$ a real measurable function with compact support such that
\begin{equation}\label{technicalThetaAssumption}
    1-\frac{s}{b}\leq\theta(s)\leq 1+\frac{s}{b}\,, \qquad s\geq 0
\end{equation}
for some $b\!>\mspace{-1.5mu}0$.
Notice that assumption~\eqref{technicalThetaAssumption} forces the function $\theta$ to be continuous at zero, with $\theta(0)=1$.
Moreover, the simplest choice for $\theta$ is the characteristic function of the ball of radius $b$ centered in the origin. We also stress that, due to the symmetry constraints of $L^2_{\mathrm{sym}}(\mathbb{R}^6)$, the boundary condition~\eqref{mfBC} implies
\begin{gather*}
    \psi(\vec{x},\vec{y})=\frac{\xi(\vec{x})}{\left\lvert\vec{y}+\frac{1}{2}\,\vec{x}\right\rvert}+\alpha(\vec{x})\mspace{1.5mu}\xi(\vec{x})+o(1),\;\;\;\qquad \vec{y}\rightarrow -\tfrac{1}{2}\mspace{1.5mu}\vec{x},\\
    \psi(\vec{x},\vec{y})=\frac{\xi(-\vec{x})}{\left\lvert\vec{y}-\frac{1}{2}\,\vec{x}\right\rvert}+\alpha(-\vec{x})\mspace{1.5mu}\xi(-\vec{x})+o(1),\qquad \vec{y}\rightarrow \tfrac{1}{2}\mspace{1.5mu}\vec{x}.
\end{gather*}
Observe that for $\gamma=0$ equation~\eqref{mfBC} reduces to the standard TMS boundary condition, which leads to the Thomas effect. Then, for $\gamma>0$ we are introducing a three-body repulsion meant to regularize the ultraviolet singularity occurring when the positions of all particles coincide.
However, since $\mathrm{supp}\,\theta$ is a compact, the usual two-body point interaction is restored when the third particle is far enough.

\vs
\noindent The procedure adopted in~\cite{BCFT} for the rigorous construction of the Hamiltonian is the following: one first introduces the quadratic form $Q$ in $L^{2}_{\mathrm{sym}}(\mathbb{R}^6)$ describing, at least formally, the expectation value of the energy of our three-body  system.
Then one defines a suitable form domain $\mathscr{D}(Q)$ and proves that $Q, \mathscr{D}(Q)$ is closed and bounded from below.
Finally, the Hamiltonian is defined as the unique self-adjoint and bounded from below operator associated to the quadratic form. 

\noindent In order to define the quadratic form $Q, \mathscr{D}(Q)$, we first introduce an auxiliary  hermitian quadratic form $\Phi^\lambda$ in $L^2(\mathbb{R}^3)$ given by~\cite[equation (3.1)]{BCFT}, namely
\begin{equation}\label{phiDef}
    \Phi^\lambda:=\Phi^\lambda_{\mathrm{diag}}\!+\Phi^\lambda_{\mathrm{off}}+\Phi_{\mathrm{reg}}\mspace{-1.5mu}+\Phi_0,\qquad \mathscr{D}(\Phi^\lambda)=H^{1/2}(\mathbb{R}^3),
\end{equation}
where
\begin{subequations}\label{phiComponents}
\begin{align}
    \Phi^\lambda_{\mathrm{diag}}[\xi]&:=12\pi\!\int_{\mathbb{R}^3}\mspace{-7.5mu}d\vec{p}\,\sqrt{\tfrac{3}{4}\mspace{1.5mu}p^2+\lambda\mspace{1.5mu}}\,\lvert\hat{\xi}(\vec{p})\rvert^2,\label{diagPhi}\\
    \Phi^\lambda_{\mathrm{off}}[\xi]&:=-\frac{12}{\pi}\!\int_{\mathbb{R}^6}\mspace{-7.5mu}d\vec{p}\+d\vec{q}\;\frac{\overline{\hat{\xi}(\vec{p})}\,\hat{\xi}(\vec{q})}{p^2+q^2+\vec{p}\!\cdot\mspace{-1.5mu}\vec{q}+\lambda},\label{offPhi}\\
    \Phi_{\mathrm{reg}}[\xi]&:=\frac{6\gamma}{\pi}\!\int_{\mathbb{R}^6}\mspace{-7.5mu}d\vec{p}\+d\vec{q}\;\frac{\overline{\hat{\xi}(\vec{p})}\,\hat{\xi}(\vec{q})}{|\vec{p}-\vec{q}|^2},\label{regPhi}\\
    \Phi_0[\xi]&:=12\pi\!\int_{\mathbb{R}^3}\mspace{-7.5mu}d\vec{y}\:\beta(\vec{y})\lvert\xi(\vec{y})\rvert^2,\qquad\beta:\vec{y}\longmapsto -\frac{1}{\mathfrak{a}}+\gamma\,\frac{\theta(y)-1}{y}.\label{notePhi}
\end{align}
\end{subequations}
By assumption~\eqref{technicalThetaAssumption}, one has $\beta\mspace{-1.5mu}\in\mspace{-1.5mu} L^\infty(\mathbb{R}^3)$ and therefore $\Phi_0$ is bounded.
The proof of the fact that $\Phi^\lambda$ is well defined in $H^{1/2}(\mathbb{R}^3)$ is relatively standard and it is given in~\cite[proposition 3.1]{BCFT}.
The more relevant point concerning $\Phi^\lambda$ is that it is coercive for $\lambda$ large enough as long as $\gamma>\gamma_c\+$, with
\begin{equation}\label{criticalGamma}
    \gamma_c=\frac{4}{3}-\frac{\sqrt{3}}{\pi}\approx 0.782004.
\end{equation}
The proof is given in~\cite[proposition 3.6]{BCFT} and it requires a rather long and non trivial analysis performed in the momentum representation.
The conclusion is that there exists $\lambda_0\mspace{-1.5mu}>\mspace{-1.5mu}0$ such that $\Phi^\lambda$ is closed and bounded from below by a positive constant for each $\lambda\!>\!\lambda_0$ and $\gamma>\gamma_c\+$.
Therefore one can uniquely define a self-adjoint, positive and invertible operator $\Gamma^\lambda$ in $L^2(\mathbb{R}^3)$ such that
\begin{equation}
    \Phi^\lambda[\xi]=\langle\xi,\,\Gamma^\lambda\xi\rangle_{L^2(\mathbb{R}^3)},\qquad\forall\,\xi\in D
\end{equation}
with $D=\mathscr{D}(\Gamma^\lambda)$ a dense subspace independent of $\lambda$.
Furthermore, defining the continuous\footnote{Here $\mathscr{D}(\mathcal{H}_0)$ must be intended as a Hilbert subspace of $L^2_{\mathrm{sym}}(\mathbb{R}^6)$ endowed with the graph norm of $\mathcal{H}_0$.} operator
\begin{equation}\label{traceOperator}
    \begin{split}
        \tau:\mathscr{D}(\mathcal{H}_0)\longrightarrow L^2(\mathbb{R}^3),\\
        \varphi\longmapsto 12\pi \,\varphi\rvert_{\pi_1}
    \end{split}
\end{equation}
satisfying $\mathrm{ran}(\tau) = H^{1/2}(\mathbb{R}^3)$ and $\mathrm{ker}(\tau)=\mathscr{D}(\dot{\mathcal{H}}_0)$, one can check that the injective operator $G(z):=(\tau R_{\mathcal{H}_0}(\bar{z}))^\ast\!\in\mathscr{B}(L^2(\mathbb{R}^3),L^2_{\mathrm{sym}}(\mathbb{R}^6))$ is represented in the Fourier space by
\begin{equation}\label{potential}
    (\widehat{G(z)\mspace{1.5mu}\xi})(\vec{k},\vec{p})=\sqrt{\frac{2}{\pi}}\;\frac{\hat{\xi}(\vec{p})+\hat{\xi}\!\left(\vec{k}\!-\frac{1}{2}\vec{p}\right)\mspace{-1.5mu}+\hat{\xi}\!\left(-\vec{k}\!-\frac{1}{2}\vec{p}\right)\!}{k^2+\frac{3}{4}p^2-z},\qquad z\in\rho(\mathcal{H}_0).
\end{equation}
%Indeed, introducing $(\vec{k},\vec{p})$ the conjugate coordinates of $(\vec{x},\vec{y})$, $L^2_{\mathrm{sym}}(\mathbb{R}^6)$ in the %space of momenta reads
%\begin{equation*}
    %\mathcal{F}L^2_{\mathrm{sym}}(\mathbb{R}^6)=\left\{\hat{\psi}\in %L^2(\mathbb{R}^6)\,\big|\;\hat{\psi}(\vec{k},\vec{p})=\hat{\psi}(-\vec{k},\vec{p})=\hat{\psi}\!\left(\tfrac{1}{2}\mspace{1.5mu}%\vec{k}+\tfrac{3}{4}\vec{p},\vec{k}-\tfrac{1}{2}\vec{p}\right)\right\}
%\end{equation*}
%that encodes the bosonic symmetry. %under the exchange of particles $1,2$ and $2,3$, that clearly also implies the condition %associated with the exchange of particles $3,1$, namely %$\hat{\psi}(\vec{k},\vec{p})=\hat{\psi}\!\left(\tfrac{1}{2}\mspace{1.5mu}\vec{k}-\tfrac{3}{4}\vec{p},-\vec{k}-\tfrac{1}{2}\vec{p}\r%ight)$.
%Typically, one properly defines the energy form describing (at least heuristically) the desired properties of the system in order %to construct an associated Hamiltonian (see~\cite[section 2]{BCFT}).
We are now in position to introduce the quadratic form in $L^2_{\mathrm{sym}}(\mathbb{R}^6)$ (\cite[definition 2.1]{BCFT})
\begin{align}
    \mathscr{D}(Q)&:=\!\left\{\psi\mspace{-1.5mu}\in\mspace{-1.5mu} L^2_{\mathrm{sym}}(\mathbb{R}^6)\,\big|\; \psi\mspace{-1.5mu}=\phi_\lambda\mspace{-1.5mu}+G(-\lambda)\mspace{1.5mu}\xi,\;\phi_\lambda\!\in\mspace{-1.5mu} H^1(\mathbb{R}^6),\; \xi\mspace{-1.5mu} \in\mspace{-1.5mu} H^{1/2}(\mathbb{R}^3),\;\lambda\!>\mspace{-1.5mu}0\right\}\!,\nonumber\\
    Q[\psi]&:=\lVert\mathcal{H}_0^{1/2}\phi_\lambda\rVert^2+\lambda\lVert\phi_\lambda\rVert^2-\lambda\lVert\psi\rVert^2+\Phi^\lambda[\xi].\label{quadraticForm}
\end{align}
Using the properties of  $\Phi^\lambda$ and $G(-\lambda)$, it is now easy to show that the above quadratic form is closed and bounded from below if $\gamma > \gamma_c\+$.
Hence it uniquely defines  a self-adjoint and lower semi-bounded operator $\mathcal H$ which, by definition, is the Hamiltonian of the three bosons system.

\vs
\noindent Following an equivalent approach, one can consider the densely defined and closed operator $\Gamma(z)$, given by
\begin{equation}
    \begin{split}
        &\Gamma(z):D\subset L^2(\mathbb{R}^3)\longrightarrow L^2(\mathbb{R}^3)\\
        \Gamma(z):=\Gamma^\lambda-&(\lambda+z)\mspace{1.5mu}G(\bar{z})^\ast G(-\lambda),\qquad \lambda\!>\!\lambda_0,\,z\in\rho(\mathcal{H}_0)
    \end{split}
\end{equation}
which represents a sort of analytic continuation of $\Gamma^\lambda$, $D$.
Actually, one can prove that $\Gamma(z)$ fulfils
\begin{subequations}
\begin{gather}
    \Gamma(z)^\ast=\Gamma(\bar{z}),\qquad\forall\,z\in\rho(\mathcal{H}_0),\\
    \Gamma(z)-\Gamma(w)=(z-w)\mspace{1.5mu}G(\bar{z})^\ast G(w),\qquad\forall\,w,z\in\rho(\mathcal{H}_0),\\
    \forall\,z \in\mathbb{C}: \;\Re(z)<-\lambda_0\:\lor\: \Im(z)>0,\quad 0\in\rho(\Gamma(z)).
\end{gather}
\end{subequations}
These properties imply, according to e.g.~\cite{P}, that for any $z\in\mathbb{C}$ such that $\Gamma(z)$ has a bounded inverse, the operator
\begin{equation}\label{resolvent}
    R(z)=(\mathcal{H}_0 -z)^{-1}+G(z)\Gamma(z)^{-1}G(\bar{z})^\ast\,
\end{equation}
defines the resolvent of a self-adjoint and bounded from below operator which coincides with the Hamiltonian $\mathcal{H}$ obtained with the approach based on the quadratic form and $$\left\{z\in\mathbb{C}\,\big|\;\Re(z)<-\lambda_0\:\lor\: \Im(z)>0\right\}\!\subseteq\mspace{-1.5mu}\rho(\mathcal{H}).$$
Moreover, one can verify that $\mathcal{H}$ coincides with $\mathcal{H}_0$ on $\mathscr{D}(\dot{\mathcal{H}}_0)$, satisfies the boundary condition~\eqref{mfBC} in the $L^2$ sense (see~\cite[remark 4.1]{BCFT}) and it is characterized by
\begin{equation}\label{hamiltonian}
    \begin{split}
    \mathscr{D}(\mathcal{H})&=\!\left\{\psi\mspace{-1.5mu}\in \mathscr{D}(Q)\,\big|\; \phi_z\!\in\mspace{-1.5mu} \mathscr{D}(\mathcal{H}_0),\; \xi\mspace{-1.5mu} \in\mspace{-1.5mu} D,\,\Gamma(z)\mspace{1.5mu}\xi = \tau\phi_z\right\}\!,\\
    \mathcal{H}\psi&=\mathcal{H}_0\mspace{1.5mu}\phi_z+z\mspace{1.5mu} G(z)\mspace{1.5mu}\xi.
\end{split}
\end{equation}

\vs\vs
\section{Optimality of \texorpdfstring{$\gamma_c$}{TEXT}}\label{optgam}
\vs

In this section we prove the optimality of the threshold parameter $\gamma_c$ defined by~\eqref{criticalGamma}.
More precisely our goal is to prove the following theorem.
\begin{theo}\label{unboundedHamiltonian}
    Whenever $\gamma<\gamma_c$, the quadratic form $Q$ given by~\eqref{quadraticForm} is unbounded from below.
\end{theo}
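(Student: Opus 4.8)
The plan is to reduce the statement to a one-dimensional computation. First, by the very definition~\eqref{quadraticForm} of $Q$, choosing $\phi_\lambda=0$ shows that $G(-\lambda)\xi\in\mathscr{D}(Q)$ for every $\xi\in H^{1/2}(\mathbb{R}^3)$ and every $\lambda>0$, with $Q[G(-\lambda)\xi]=-\lambda\lVert G(-\lambda)\xi\rVert^2+\Phi^\lambda[\xi]\le\Phi^\lambda[\xi]$ and $\lVert G(-\lambda)\xi\rVert\le\lVert G(-\lambda)\rVert_{\mathscr{B}}\lVert\xi\rVert$. Hence it is enough to exhibit, for one fixed $\lambda>0$, a family $\{\xi_R\}\subset H^{1/2}(\mathbb{R}^3)$ with $\sup_R\lVert\xi_R\rVert<\infty$ and $\Phi^\lambda[\xi_R]\to-\infty$: then $\{G(-\lambda)\xi_R\}$ is bounded in $L^2_{\mathrm{sym}}(\mathbb{R}^6)$ and $Q\to-\infty$ along it, so $Q$ is unbounded from below. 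The whole problem is thus transferred to $\Phi^\lambda$, and the instability must come from its ultraviolet, scale-invariant part.

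To make this quantitative, take $\xi_R$ radial with $\hat\xi_R(\vec p)=\lvert\vec p\rvert^{-2}\mathbf{1}_{[1,R]}(\lvert\vec p\rvert)$. One computes $\lVert\xi_R\rVert^2=4\pi(1-1/R)\le4\pi$ and $\lVert\xi_R\rVert_{H^{1/2}}^2=O(\ln R)$, so $\xi_R\in H^{1/2}(\mathbb{R}^3)$. Decomposing $\Phi^\lambda=\Phi^\lambda_{\mathrm{diag}}+\Phi^\lambda_{\mathrm{off}}+\Phi_{\mathrm{reg}}+\Phi_0$, the bounds $\lvert\Phi_0[\xi_R]\rvert\le12\pi\lVert\beta\rVert_\infty\lVert\xi_R\rVert^2=O(1)$, $0\le\sqrt{\tfrac34 p^2+\lambda}-\tfrac{\sqrt3}{2}p\le\tfrac{2\lambda}{\sqrt3}\,p^{-1}$, and $\bigl\lvert(p^2+q^2+\vec p\cdot\vec q+\lambda)^{-1}-(p^2+q^2+\vec p\cdot\vec q)^{-1}\bigr\rvert\le\lambda(p^2+q^2+\vec p\cdot\vec q)^{-2}$ on $\operatorname{supp}\hat\xi_R$ show that replacing $\Phi^\lambda_{\mathrm{diag}},\Phi^\lambda_{\mathrm{off}}$ by their $\lambda=0$ values and discarding $\Phi_0$ changes $\Phi^\lambda[\xi_R]$ by $O(1)$ only. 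Therefore $\Phi^\lambda[\xi_R]=\Phi_\infty[\xi_R]+O(1)$ with $\Phi_\infty[\xi]:=6\sqrt3\,\pi\!\int_{\mathbb{R}^3}\! p\,\lvert\hat\xi(\vec p)\rvert^2 d\vec p-\tfrac{12}{\pi}\!\int_{\mathbb{R}^6}\!\frac{\overline{\hat\xi(\vec p)}\,\hat\xi(\vec q)}{p^2+q^2+\vec p\cdot\vec q}\,d\vec p\,d\vec q+\tfrac{6\gamma}{\pi}\!\int_{\mathbb{R}^6}\!\frac{\overline{\hat\xi(\vec p)}\,\hat\xi(\vec q)}{\lvert\vec p-\vec q\rvert^2}\,d\vec p\,d\vec q$, the homogeneous quadratic form extracted from $\Phi^\lambda$ in the ultraviolet regime.

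Restrict $\Phi_\infty$ to radial $\hat\xi=f$ and perform the two angular integrations: each six-dimensional kernel integrates over $S^2\times S^2$ to $\tfrac{8\pi^2}{pq}$ times an explicit logarithm of $p/q$, leaving a scale-invariant one-dimensional form in $g:=p^2 f$, which the unitary Mellin transform $\hat g$ turns into multiplication on $L^2(\mathbb{R},dt)$ by the even function $\Lambda_\gamma(t)=24\sqrt3\,\pi^2-96\pi\,\widetilde\ell(t)+48\pi\gamma\,\widetilde m(t)$, where $\widetilde m(t)=\int_0^\infty\ln\tfrac{1+x}{\lvert1-x\rvert}\,x^{-it}\tfrac{dx}{x}$ and $\widetilde\ell(t)=\int_0^\infty\ln\tfrac{x^2+x+1}{x^2-x+1}\,x^{-it}\tfrac{dx}{x}$. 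For $g=g_R=p^2\hat\xi_R=\mathbf{1}_{[1,R]}$ one has $\lvert\hat g_R(t)\rvert^2=\tfrac{2}{\pi}\,t^{-2}\sin^2(\tfrac12 t\ln R)$, a non-negative kernel of total mass $\ln R$ that concentrates at $t=0$; hence $\Phi_\infty[\xi_R]=\Lambda_\gamma(0)\ln R+o(\ln R)$ by continuity of $\Lambda_\gamma$. Finally, from $\ln\tfrac{x^2+x+1}{x^2-x+1}=\ln\tfrac{1+x}{\lvert1-x\rvert}-\ln\tfrac{1+x^3}{\lvert1-x^3\rvert}$ one gets $\widetilde\ell(0)=\tfrac23\widetilde m(0)$, and the power series of $\ln\tfrac{1+x}{1-x}$ gives $\widetilde m(0)=2\sum_{k\ge0}(2k+1)^{-2}=\pi^2/2$, so that $\Lambda_\gamma(0)=24\sqrt3\,\pi^2-32\pi^3+24\pi^3\gamma=24\pi^3(\gamma-\gamma_c)$. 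For $\gamma<\gamma_c$ this is strictly negative, whence $\Phi^\lambda[\xi_R]\to-\infty$ while $\lVert\xi_R\rVert\le\sqrt{4\pi}$, and the first paragraph finishes the proof.

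The one genuinely delicate point is the identification of $\Lambda_\gamma$ in the third step: the angular integrals, and above all the normalisation constants of the Mellin transform, must be tracked with care so that $\Lambda_\gamma(0)$ comes out exactly $24\pi^3(\gamma-\gamma_c)$ --- the agreement with the threshold~\eqref{criticalGamma} of~\cite{BCFT} serving as a consistency check rather than as an input. The $O(1)$ error estimates of the second step are routine but must be carried out honestly, since everything rests on those corrections being negligible against $\ln R$ even though the trial charges $\xi_R$ spread over an unbounded range of momenta. In effect this is the $\gamma<\gamma_c$ counterpart of the coercivity analysis in~\cite[Proposition~3.6]{BCFT}, needed only in the single angular channel where that estimate is sharp.
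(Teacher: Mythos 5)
Your proof is correct, and it rests on the same underlying mechanism as the paper's: reduce $Q$ to $\Phi^\lambda$ via trial states of the form $G(-\lambda)\xi$, strip off the $\lambda$-dependent and bounded pieces, and exploit the fact that the Mellin symbol of the remaining scale-invariant form takes at the origin a value proportional to $\gamma-\gamma_c$ (your $\Lambda_\gamma(0)=24\pi^3(\gamma-\gamma_c)$ is exactly $48\pi^2 S(0)$ in the paper's notation). The implementation, however, is genuinely different. The paper fixes a profile $f_\beta$, scales it as $\eta_n=n^2f_\beta(n\,\cdot)$, proves the leading-order expansion $\Phi^\lambda[\eta_n]=n^2(\Phi^0_{\mathrm{diag}}+\Phi^0_{\mathrm{off}}+\Phi_{\mathrm{reg}})[f_\beta]+\mathcal{O}(n)$ (lemma~\ref{leadingOrderLemma}) and then, quoting the diagonalization of~\cite{BCFT}, sends $\beta\to0$ to make the homogeneous form negative (lemma~\ref{trialFunctionExhibition}); you instead work at fixed $\lambda$ with the logarithmically divergent charges $\hat\xi_R=p^{-2}\mathbf{1}_{[1,R]}$, whose transform in the log variable is a Fej\'er-type kernel of mass $\ln R$ concentrating at the origin, so the negativity of the symbol at $0$ is used directly, the two-parameter limit (first $n\to\infty$, then $\beta$ small) is avoided, and the symbol values $\widetilde m(0)=\pi^2/2$, $\widetilde\ell(0)=\pi^2/3$ are computed by hand rather than read off the $\sinh/\tanh$ formulas. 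Your handling of the normalization is also cleaner than the paper's: since $G(-\lambda)$ is bounded and $\lVert\xi_R\rVert\le\sqrt{4\pi}$, the trial states $\psi_R=G(-\lambda)\xi_R$ have uniformly bounded norm, so $Q[\psi_R]\to-\infty$ yields unboundedness from below even in the relative sense $Q\ge-c\lVert\cdot\rVert^2$. Two small points to repair in a final write-up: the chain ``$\widetilde m(0)=2\sum_{k\ge0}(2k+1)^{-2}$'' should read $4\sum_{k\ge0}(2k+1)^{-2}$ (the stated value $\pi^2/2$ is nevertheless correct, the regions $x<1$ and $x>1$ contributing equally), and the approximate-identity step needs $\Lambda_\gamma$ bounded as well as continuous at $0$ --- true because the logarithmic kernels are integrable with respect to $dx/x$, but it should be stated.
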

\noindent In order to achieve the result, we shall adapt the ideas contained in~\cite[section 5]{FT}.
Denote for short $G^\lambda\!:=\mspace{-1.5mu}G(-\lambda)$ for any $\lambda\!>\!0$ and let $\{u_n\}_{n\in\,\mathbb{N}}\!\subset\!\mathscr{D}(Q)$ be a sequence of trial functions given by
\begin{gather}
    u_n(\vec{x},\vec{y})=(G^\lambda\mspace{1.5mu}\eta_n)(\vec{x},\vec{y}),\\
    \eta_n(\vec{y})=n^2\mspace{1.5mu}f(n\mspace{1.5mu}\vec{y}),\qquad f\in H^{1/2}(\mathbb{R}^3).\label{trialCharge}
\end{gather}
We stress that, by an explicit estimate due to~\eqref{potential}, one finds
\begin{equation}
    \inf_{n\in\mspace{1.5mu}\mathbb{N}}\,\lVert G^\lambda \eta_n\rVert_{L^2_{\mathrm{sym}}(\mathrm{R}^6)} \gneq 0.
\end{equation}
Indeed,
\begin{align*}
    \lVert G^\lambda\eta_n\rVert^2&=\frac{2}{\pi}\!\int_{\mathbb{R}^6}\mspace{-7.5mu}d\vec{k}d\vec{p}\:\frac{1}{n^2}\frac{\left\lvert\hat{f}\!\left(\tfrac{\vec{p}}{n}\right)+\hat{f}\mspace{-1.5mu}\left(\tfrac{\vec{k}}{n}-\tfrac{\vec{p}}{2n}\right)+\hat{f}\mspace{-1.5mu}\left(\tfrac{\vec{k}}{n}+\tfrac{\vec{p}}{2n}\right)\right\rvert^2}{\left(k^2+\frac{3}{4}p^2+\lambda\right)^2}\\
    &=\frac{2}{\pi}\!\int_{\mathbb{R}^6}\mspace{-7.5mu}d\vec{\kappa}d\mspace{-1.5mu}\vec{q}\;\frac{\left\lvert\hat{f}(\vec{q})+\hat{f}\mspace{-1.5mu}\left(\vec{\kappa}-\tfrac{\vec{q}}{2}\right)+\hat{f}\mspace{-1.5mu}\left(\vec{\kappa}+\tfrac{\vec{q}}{2}\right)\right\rvert^2}{\left(\kappa^2+\frac{3}{4}q^2+\frac{\lambda}{n^2}\right)^2}>\lVert G^\lambda f\rVert^2.
\end{align*}
Our goal is to show that whenever $\gamma$ is smaller than the threshold value $\gamma_c$ given by~\eqref{criticalGamma}, one has
\begin{equation}
    \lim_{n\rightarrow\, +\infty} Q[u_n] =-\infty,
\end{equation}
According to~\eqref{quadraticForm}, we have
\begin{equation}
    Q[u_n]=-\lambda\,\lVert G^\lambda\mspace{1.5mu}\eta_n\rVert_{L^2_{\mathrm{sym}}(\mathrm{R}^6)}+\Phi^\lambda[\eta_n]\leq-\lambda\,\lVert G^\lambda f\rVert_{L^2_{\mathrm{sym}}(\mathrm{R}^6)}+\Phi^\lambda[\eta_n]
\end{equation}
and therefore the theorem is proven if we exhibit some $f\!\in\mspace{-1.5mu}H^{1/2}(\mathbb{R}^3)$ such that
\begin{equation}\label{phiUnboundedBelow}
    \lim_{n\rightarrow\,+\infty}\Phi^\lambda[\eta_n]=-\infty.
\end{equation}
\begin{lemma}\label{leadingOrderLemma}
    Let $\Phi^\lambda$ and $\eta_n$ be given by~\eqref{phiDef} and~\eqref{trialCharge}, respectively.
    Then, one has
    \begin{equation*}
        \Phi^\lambda[\eta_n]=n^2(\Phi^0_{\mathrm{diag}}\!+\Phi^0_{\mathrm{off}}\mspace{-1.5mu}+\Phi_{\mathrm{reg}})[f]+\mathcal{O}(n).
    \end{equation*}
    \begin{proof}
        First of all, we can neglect the bounded component $\Phi_0\+$, since
        \begin{align*}
            \Phi_0[\eta_n]&=12\pi n^4\!\!\int_{\mathbb{R}^3}\mspace{-7.5mu}d\vec{y}\:\beta(y)\,\lvert f(n\mspace{1.5mu}\vec{y})\rvert^2=12\pi\mspace{1.5mu} n\!\int_{\mathbb{R}^3}\mspace{-7.5mu}d\vec{t}\:\beta\!\left(\tfrac{t}{n}\right)\lvert f(\vec{t})\rvert^2\\
            &\leq 12\pi\mspace{1.5mu} n\,\lVert \beta\rVert_{L^\infty(\mathbb{R}^3)}\mspace{1.5mu}\lVert f\rVert^2_{L^2(\mathbb{R}^3)}\implies \Phi_0[\eta_n]=\mathcal{O}(n),\qquad n\rightarrow +\infty.
        \end{align*}
        Next, rescaling properly the variables in computing $\Phi^\lambda_{\mathrm{diag}}[\eta_n]$, one gets
        \begin{align*}
            \Phi^\lambda_{\mathrm{diag}}&[\eta_n]=12\pi\mspace{1.5mu}n\! \int_{\mathbb{R}^3}\mspace{-7.5mu}d\vec{\kappa}\: \sqrt{\tfrac{3}{4}\, n^2\kappa^2 +\lambda\,}\,\lvert\hat{f}(\vec{\kappa})\rvert^2\\
            &=6\sqrt{3}\,\pi\mspace{1.5mu}n^2\! \int_{\mathbb{R}^3}\mspace{-7.5mu}d\vec{\kappa}\: \lvert\vec{\kappa}\rvert\,\lvert\hat{f}(\vec{\kappa})\rvert^2+12\pi\mspace{1.5mu}n\!\int_{\mathbb{R}^3}\mspace{-7.5mu}d\vec{\kappa}\mspace{1.5mu} \left(\mspace{-1.5mu}\sqrt{\tfrac{3}{4}\,n^2\kappa^2+\lambda\,}-\sqrt{\tfrac{3}{4}}\,n\kappa\right)\mspace{-1.5mu}\lvert\hat{f}(\vec{\kappa})\rvert^2\\
            &=n^2\Phi^{\mspace{1.5mu}0}_{\mathrm{diag}}[f]+o(n).
        \end{align*}
        Indeed, exploiting the elementary inequality
        $\sqrt{a^2+b^2}-\lvert a \rvert\leq \lvert b\rvert$, we can use the dominated convergence theorem to obtain
        \begin{equation*}
            \lim_{n\rightarrow +\infty}\int_{\mathbb{R}^3}\mspace{-7.5mu}d\vec{\kappa} \left(\mspace{-1.5mu}\sqrt{\tfrac{3}{4}\,n^2\kappa^2+\lambda\,}-\sqrt{\tfrac{3}{4}}\,n\kappa\right)\mspace{-1.5mu}\lvert\hat{f}(\vec{\kappa})\rvert^2=0.
        \end{equation*}
        Concerning the regularizing contribution, one simply has
        \begin{align*}
            \Phi_{\mathrm{reg}}[\eta_n]&=\frac{6\gamma}{\pi}\mspace{1.5mu}n^2\!\int_{\mathbb{R}^3}\mspace{-7.5mu}d\vec{p}\! \int_{\mathbb{R}^3}\mspace{-7.5mu} d\vec{q}\;\frac{\overline{\hat{f}(\vec{p})}\,\hat{f}(\vec{q})}{\lvert\vec{p}-\vec{q}\rvert^2}\\
            &=n^2 \Phi_{\mathrm{reg}}[f].
        \end{align*}
      Finally, we compute $\Phi^\lambda_{\mathrm{off}}[\eta_n] $
        \begin{align*}
            \Phi^\lambda_{\mathrm{off}}&[\eta_n] =-\frac{12}{\pi}\mspace{1.5mu}n^2\!\int_{\mathbb{R}^6}\mspace{-7.5mu}d\vec{p}\+d\vec{q}\;\frac{\overline{\hat{f}(\vec{p})}\,\hat{f}(\vec{q})}{p^2+q^2+\vec{p}\!\cdot\mspace{-1.5mu}\vec{q}+\frac{\lambda}{n^2}}\\
            = &-\frac{12}{\pi}\mspace{1.5mu}n^2\!\int_{\mathbb{R}^6}\mspace{-7.5mu}d\vec{p}\+d\vec{q}\;\frac{\overline{\hat{f}(\vec{p})}\,\hat{f}(\vec{q})}{p^2+q^2+\vec{p}\!\cdot\mspace{-1.5mu}\vec{q}}\,+\\
            &+\frac{12}{\pi}\mspace{1.5mu}\lambda\!\int_{\mathbb{R}^6}\mspace{-7.5mu}d\vec{p}\+d\vec{q}\;\frac{\overline{\hat{f}(\vec{p})}\,\hat{f}(\vec{q})}{\big(p^2+q^2+\vec{p}\!\cdot\mspace{-1.5mu}\vec{q}+\frac{\lambda}{n^2}\big)\big(p^2+q^2+\vec{p}\!\cdot\mspace{-1.5mu}\vec{q}\big)}.
        \end{align*}
        Defining the integral operator in $L^2(\mathbb{R}^3)$ given by
        \begin{equation}
            (P_{\!n}\+\hat{\varphi})(\vec{p}):=\frac{12}{\pi}\mspace{1.5mu}\lambda\!\int_{\mathrm{R}^3}\mspace{-7.5mu}d\vec{q}\;\frac{\hat{\varphi}(\vec{q})}{\big(p^2+q^2+\vec{p}\!\cdot\mspace{-1.5mu}\vec{q}+\frac{\lambda}{n^2}\big)\big(p^2+q^2+\vec{p}\!\cdot\mspace{-1.5mu}\vec{q}\big)},
        \end{equation}
        we can write
        \begin{equation*}
            \Phi^\lambda_{\mathrm{off}}[\eta_n]=n^2\Phi^{\mspace{1.5mu}0}_{\mathrm{off}}[f]+\!\int_{\mathrm{R}^3}\mspace{-7.5mu}d\vec{p}\:\overline{\hat{f}(\vec{p})}\,(P_{\!n}\+\hat{f})(\vec{p}).
        \end{equation*}
        We notice that  $P_{\!n}$ is a Hilbert-Schmidt operator and 
        \begin{align*}
            \lVert P_{\!n}\rVert^2_{\mathscr{B}(L^2(\mathbb{R}^3))}\!&\leq\frac{124\mspace{1.5mu}\lambda^2\!}{\pi^2}\! \int_{\mathbb{R}^6}\mspace{-7.5mu}d\vec{p}\+d\vec{q}\;\frac{1}{\big(p^2+q^2+\vec{p}\!\cdot\mspace{-1.5mu}\vec{q}+\frac{\lambda}{n^2}\big)^2\big(p^2+q^2+\vec{p}\!\cdot\mspace{-1.5mu}\vec{q}\big)^2}\\
            &\mspace{-18mu}\leq\frac{124\mspace{1.5mu}\lambda^2\!}{\pi^2}\!\int_{\mathbb{R}^6}\mspace{-7.5mu}d\vec{p}\+d\vec{q}\;\frac{4}{\big(\frac{p^2+\,q^2\!}{2}+\frac{\lambda}{n^2}\big)^2\big(p^2+q^2\big)^2}=496\mspace{1.5mu}\pi\lambda^2\!\int_0^{+\infty}\mspace{-24mu}dk\;\frac{k}{\big(\frac{k^2\!}{2}+\frac{\lambda}{n^2}\big)^2\!}\\
            &\mspace{-18mu}=496\mspace{1.5mu}\pi\lambda\mspace{1.5mu}n^2 . 
        \end{align*}
       Using the above estimate, we find 
       \begin{equation*}
            \Phi^\lambda_{\mathrm{off}}[\eta_n]=n^2\Phi^{\mspace{1.5mu}0}_{\mathrm{off}}[f]+ O(n)
        \end{equation*}
        and the lemma is proven.
        
    \end{proof}
\end{lemma}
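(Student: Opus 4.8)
The plan is to decompose $\Phi^\lambda$ according to~\eqref{phiDef} and handle the four summands $\Phi^\lambda_{\mathrm{diag}}$, $\Phi^\lambda_{\mathrm{off}}$, $\Phi_{\mathrm{reg}}$ and $\Phi_0$ separately, in each case performing the change of variables that strips the dilation factor from the argument of $\hat f$. The scaling $\eta_n(\vec y)=n^2 f(n\vec y)$ reads $\hat\eta_n(\vec\kappa)=n^{-1}\hat f(\vec\kappa/n)$ in Fourier space, so in every term one substitutes $\vec p=n\vec\kappa$ (and likewise $\vec q=n\vec q'$ in the double integrals, $\vec y=\vec t/n$ in the position-space one), collects the Jacobian together with the $n^{-1}$ factors produced by $\hat\eta_n$, and isolates the piece that is exactly $n^2$ times the corresponding $\lambda=0$ functional evaluated on $f$, checking that whatever is left over is $\mathcal O(n)$.

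Three of the four terms require essentially no work. For $\Phi_0$ one uses $\beta\in L^\infty(\mathbb R^3)$, which holds by~\eqref{technicalThetaAssumption}: after rescaling $\Phi_0[\eta_n]=12\pi\,n\int_{\mathbb R^3}\beta(\vec t/n)\,|f(\vec t)|^2\,d\vec t$, bounded by $12\pi\,n\,\|\beta\|_{L^\infty}\|f\|_{L^2}^2=\mathcal O(n)$. For $\Phi_{\mathrm{reg}}$ the kernel $|\vec p-\vec q|^{-2}$ is homogeneous of precisely the degree that makes the rescaling exact, so $\Phi_{\mathrm{reg}}[\eta_n]=n^2\Phi_{\mathrm{reg}}[f]$ with no remainder. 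For $\Phi^\lambda_{\mathrm{diag}}$ the rescaling yields $12\pi\,n\int_{\mathbb R^3}\sqrt{\tfrac34 n^2\kappa^2+\lambda}\,|\hat f(\vec\kappa)|^2\,d\vec\kappa$; splitting $\sqrt{\tfrac34 n^2\kappa^2+\lambda}=\sqrt{\tfrac34}\,n|\vec\kappa|+\big(\sqrt{\tfrac34 n^2\kappa^2+\lambda}-\sqrt{\tfrac34}\,n|\vec\kappa|\big)$ extracts $n^2\Phi^0_{\mathrm{diag}}[f]$, while the elementary inequality $\sqrt{a^2+b^2}-|a|\le|b|$ dominates the remaining integrand by $\sqrt\lambda\,|\hat f(\vec\kappa)|^2\in L^1(\mathbb R^3)$, so dominated convergence makes the remainder $o(n)$, in particular $\mathcal O(n)$.

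The substantive point, and the only place where I expect genuine work, is $\Phi^\lambda_{\mathrm{off}}$. The rescaling turns the denominator $p^2+q^2+\vec p\cdot\vec q+\lambda$ into $n^2\big(p^2+q^2+\vec p\cdot\vec q+\lambda/n^2\big)$, so $\Phi^\lambda_{\mathrm{off}}[\eta_n]=-\tfrac{12}{\pi}\,n^2\!\int_{\mathbb R^6}\overline{\hat f(\vec p)}\,\hat f(\vec q)\,\big(p^2+q^2+\vec p\cdot\vec q+\lambda/n^2\big)^{-1}d\vec p\,d\vec q$. Writing $\tfrac1{A+\lambda/n^2}=\tfrac1A-\tfrac{\lambda/n^2}{A(A+\lambda/n^2)}$ with $A:=p^2+q^2+\vec p\cdot\vec q$ splits this into $n^2\Phi^0_{\mathrm{off}}[f]$ plus a correction $\langle\hat f,P_n\hat f\rangle$, where the explicit $n^2$ has cancelled against the $\lambda/n^2$ and $P_n$ is the integral operator with kernel $\tfrac{12\lambda}{\pi}\big[(p^2+q^2+\vec p\cdot\vec q)(p^2+q^2+\vec p\cdot\vec q+\lambda/n^2)\big]^{-1}$. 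The delicate feature is that a priori this correction could itself be of order $n^2$, and the claim follows only once one shows $\|P_n\|_{\mathscr B(L^2(\mathbb R^3))}=\mathcal O(n)$. I would obtain this by estimating the Hilbert--Schmidt norm: dominating the kernel via $p^2+q^2+\vec p\cdot\vec q\ge\tfrac12(p^2+q^2)$, passing to polar coordinates in $\mathbb R^6$ with radial variable $R=(p^2+q^2)^{1/2}$, and reducing the resulting integral to $\int_0^{+\infty}R\,\big(\tfrac12 R^2+\lambda/n^2\big)^{-2}dR=\mathcal O(n^2)$; hence $\|P_n\|_{\mathrm{HS}}^2=\mathcal O(n^2)$ and $|\langle\hat f,P_n\hat f\rangle|\le\|P_n\|_{\mathrm{HS}}\|f\|_{L^2}^2=\mathcal O(n)$.

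Summing the four contributions then gives $\Phi^\lambda[\eta_n]=n^2\big(\Phi^0_{\mathrm{diag}}+\Phi^0_{\mathrm{off}}+\Phi_{\mathrm{reg}}\big)[f]+\mathcal O(n)$, the $\lambda=0$ functionals on the right being finite on $H^{1/2}(\mathbb R^3)$ as part of the framework already set up for $\Phi^\lambda$. Apart from the Hilbert--Schmidt bound on $P_n$, the argument is just bookkeeping of scaling powers.
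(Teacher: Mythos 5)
Your proposal is correct and follows essentially the same route as the paper: the same term-by-term decomposition, the same splitting of the square root for $\Phi^\lambda_{\mathrm{diag}}$ with dominated convergence, the exact homogeneity of $\Phi_{\mathrm{reg}}$, and for $\Phi^\lambda_{\mathrm{off}}$ the same resolvent-type identity producing the operator $P_n$ together with the Hilbert--Schmidt bound $\lVert P_n\rVert_{\mathrm{HS}}=\mathcal{O}(n)$ obtained from $p^2+q^2+\vec{p}\cdot\vec{q}\geq\tfrac{1}{2}(p^2+q^2)$ and the radial integral. Nothing essential is missing.
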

\noindent In light of lemma~\ref{leadingOrderLemma}, it is straightforward to see that~\eqref{phiUnboundedBelow} is achieved as soon as we exhibit a function $f\!\in\mspace{-1.5mu} H^{1/2}(\mathbb{R}^3)$ such that, whenever $\gamma\!<\mspace{-1.5mu}\gamma_c\+$, holds
\begin{equation*}
    \Phi^{\mspace{1.5mu}0}_{\mathrm{diag}}[f]+\Phi^{\mspace{1.5mu}0}_{\mathrm{off}}[f]+\Phi_{\mathrm{reg}}[f]<0.
\end{equation*}
A relevant feature of the previous lemma is that the leading order of $\Phi^\lambda[\eta_n]$ as $n$ goes to infinity does not depend on $\lambda$ and, therefore, we have reduced the problem to the study of the hermitian quadratic form evaluated in $\lambda=0$ which is diagonalizable.

\noindent In the next lemma we exhibit the trial function we need to prove our result.
\begin{lemma}\label{trialFunctionExhibition}
    Let $\gamma_c$ be defined by~\eqref{criticalGamma}, assume $\gamma\!<\mspace{-1.5mu}\gamma_c$ and let us consider the family of trial functions $f_\beta\!\in\mspace{-1.5mu}H^{1/2}(\mathbb{R}^3)$ such that
    \begin{equation*}
        \hat{f}_\beta(\vec{p})=\tfrac{1}{p^2}\exp\!\left(-\tfrac{p^\beta\mspace{1.5mu}+\,p^{-\beta}}{\mspace{-12mu}2}\mspace{-1.5mu}\right)\!,\qquad\beta>0.
%        c_\beta^2=\beta\left[\int_{\mathbb{R}}\!dt\:\sqrt{1+e^{-2t/\beta}}\exp\!\left(-2\cosh t\right)\right]^{-1}\mspace{-18mu},
    \end{equation*}
    Then there exists  $ \beta_0>0$ such that for any $\beta \in (0, \beta_0)$ we have 
    \begin{equation*}
        \big(\Phi^{\mspace{1.5mu}0}_{\mathrm{diag}}\!+\Phi^{\mspace{1.5mu}0}_{\mathrm{off}}\mspace{-1.5mu}+\Phi_{\mathrm{reg}}\big)[f_\beta]\mspace{-1.5mu}<0.
    \end{equation*}
    \begin{proof}
        We stress that our trial functions are entirely lying in the $s$-wave subspace, therefore we have
        \begin{subequations}\label{phiSWave}
        \begin{align}
            \Phi^0_{\mathrm{diag}}[f_\beta]&:=48\pi^2\sqrt{\tfrac{3}{4}}\!\int_0^{+\infty}\mspace{-24mu}d\mspace{-1.5mu}p\:p^3\,\lvert\hat{f}_\beta(p)\rvert^2,\label{diagPhiS}\\
            \Phi^0_{\mathrm{off}}[f_\beta]&:=-96\pi\!\int_0^{+\infty}\mspace{-24mu}d\mspace{-1.5mu}p\:p\!\int_0^{+\infty}\mspace{-24mu}dq\:q\,\overline{\hat{f}_\beta(p)}\,\hat{f}_\beta(q)
           \ln\mspace{-1.5mu}\left(\frac{p^2+q^2+pq}{p^2+q^2-pq}\right)\! %\!\int_{-1}^1\mspace{-7.5mu}ds\;\frac{1}{p^2+q^2+pq\,s}
            ,\label{offPhiS}\\
            \Phi_{\mathrm{reg}}[f_\beta]&:=24\pi\gamma\!\!\int_0^{+\infty}\mspace{-24mu}d\mspace{-1.5mu}p\:p\!\int_0^{+\infty}\mspace{-24mu}dq\:q\,\overline{\hat{f}_\beta(p)}\,\hat{f}_\beta(q)
           \ln\mspace{-1.5mu}\left(\frac{p^2+q^2+2pq}{p^2+q^2-2pq}\right)\! %\!\int_{-1}^1\mspace{-7.5mu}ds\;\frac{1}{p^2+q^2-2pq\,s}
            ,\label{regPhiS}
        \end{align}
        \end{subequations}
        where we have used the identity\footnote{Equation~\eqref{sWaveAddition} is an application of the addition formula for the spherical harmonics in the $s$-wave.}
        \begin{equation}\label{sWaveAddition}
            \int_{\mathbb{R}^6}\mspace{-7.5mu}d\vec{p}\+d\vec{q}\:g\!\left(p,q,\tfrac{\vec{p}\,\cdot\+\vec{q}}{p\+q}\right)=8\pi^2\!\!\int_0^{+\infty}\mspace{-24mu}d\mspace{-1.5mu}p\:p^2\!\int_0^{+\infty}\mspace{-24mu}dq\:q^2\!\int_{-1}^1\mspace{-7.5mu}du\:g(p,q,u)
        \end{equation}
        holding for any integrable function $g:\mathbb{R}^2_+\times[-1,1]\longrightarrow\mathbb{C}$.
        %$g\!\in L^1(\mathbb{R}^2_+\times[-1,1],\,p^2q^2\mspace{1.5mu}dpdq\!\otimes\! ds)$.
        According to, e.g.~\cite[lemma 3.4]{BCFT}, the quantities in equations~\eqref{phiSWave} can be diagonalized through the unitary transformation
        \begin{equation}\label{MellinTransform}
            \begin{split}
            \mathcal{M}:L^2(\mathbb{R}_+,\,p^2\textstyle{\sqrt{p^2+1}}\,dp)\longrightarrow L^2(\mathbb{R}),\\
            \psi\longmapsto \psi^{\sharp}(x)=\frac{1}{\mspace{-4.5mu}\sqrt{\mspace{1.5mu}2\pi}\,}\!\int_{\mathbb{R}}\mspace{-3mu}dt\:e^{-i\mspace{1.5mu}tx}e^{2t}\psi(e^t)
            \end{split}
        \end{equation}
        yielding (see~\cite[lemmata 3.4, 3.5]{BCFT})
        \begin{subequations}\label{diagonalizedQF}
        \begin{gather}
            \Phi^{\mspace{1.5mu}0}_{\mathrm{diag}}[f_\beta]=48\mspace{1.5mu}\sqrt{\tfrac{3}{4}}\,\pi^2\!
            \int_{\mathbb{R}} \!dx\:\lvert \hat{f}^{\,\sharp}_\beta(x)\rvert^2,\label{diagDiagonalizedQF}\\
            \Phi^{\mspace{1.5mu}0}_{\mathrm{off}}[f_\beta]=-48\pi^2\!\mspace{-1.5mu} \int_{\mathbb{R}}\!dx\:\vert \hat{f}^{\,\sharp}_\beta(x)\rvert^2\,\frac{4\sinh\!\left(\frac{\pi}{6} x\right)}{x\cosh\!\left(\frac{\pi}{2}x\right)},\label{offDiagonalizedQF}\\
            \Phi_{\mathrm{reg}}[f_\beta]=48\pi^2\! \int_{\mathbb{R}}\!dx\: \lvert \hat{f}^{\,\sharp}_\beta(x)\rvert^2\, \frac{\gamma\tanh\!\left(\frac{\pi}{2} x\right)}{x}.\label{regDiagonalizedQF}
        \end{gather}
        \end{subequations}
       Let us introduce the bounded and continuous function
        \begin{equation}
            S(x):=\frac{\sqrt{3}}{2}+\frac{\gamma\sinh\!\left(\frac{\pi}{2}x\right)-4\sinh\!\left(\tfrac{\pi}{6}x\right)}{x\mspace{1.5mu}\cosh\!\left(\frac{\pi}{2}x\right)}
        \end{equation}
        so that we have
        \begin{gather}\label{popo}
            \big(\Phi^{\mspace{1.5mu}0}_{\mathrm{diag}}\!+\Phi^{\mspace{1.5mu}0}_{\mathrm{off}}\mspace{-1.5mu}+\Phi_{\mathrm{reg}}\big)[f_\beta]=48\pi^2\!\!\int_{\mathbb{R}}\!dx\:\lvert\hat{f}^{\,\sharp}_\beta(x)\rvert^2\,S(x),
        \end{gather}
        with
        \begin{equation}
            \lim_{x\rightarrow\, 0} S(x)=\tfrac{\sqrt{3}}{2}-\frac{2\pi}{3}+\frac{\pi}{2}\gamma=\frac{\pi}{2}(\gamma-\gamma_c)<0.
        \end{equation}
        Roughly speaking, the integral in~\eqref{popo} is negative if we choose the trial function such that the support of $\hat{f}^{\,\sharp}_\beta$ is sufficiently concentrated in a neighborhood of zero.
        More precisely, considering the explicit expression of $\hat{f}_\beta\+$, we have\footnote{
        We stress that $\hat{f}^{\,\sharp}_\beta(x)=\sqrt{\tfrac{2}\pi}\,\tfrac{1}{\beta}\,\mathrm{K}_{\mspace{1.5mu}i\mspace{1.5mu}x/\beta}(1)$
        because of the integral representation for the Macdonald function $\mathrm{K}_\nu$ given by~\cite[p. 384, 3.547 4]{GR}.}
        \begin{equation}
            \hat{f}^{\,\sharp}_\beta(x)=\frac{1}{\beta}\,\hat{h}\!\left(\tfrac{x}{\beta}\right)\!
        \end{equation}
        where $h(p) =e^{-\cosh{p}}$ with $h \in \mathcal{S}(\mathbb{R})$. 
        Then
        \begin{align*}
            \int_{\mathbb{R}}\!dx\:\lvert \hat{f}^{\,\sharp}_\beta(x)\rvert^2 S(x)=\frac{1}{\beta^2}\!\int_{\mathbb{R}}\!dx\:\lvert\hat{h}(x/\beta)\rvert^2\,S(x)\\
            =\frac{1}{\beta}\!\int_{\mathbb{R}}\!dx\:\lvert\hat{h}(x)\rvert^2\,S(\beta \mspace{1.5mu}x). 
        \end{align*}
        By dominated convergence theorem we obtain
        \begin{equation*}
            \lim_{\beta\rightarrow\, 0^+}\int_{\mathbb{R}}\!dx\:\lvert\hat{h}(x)\rvert^2\,S(\beta\mspace{1.5mu}x)=\lVert h\rVert^2_{L^2(\mathbb{R})}\,S(0)<0.
        \end{equation*}
        Hence, the lemma is proven by noticing that the previous integral is continuous in $\beta\!>\!0$ and therefore,  the quadratic form $\big(\Phi^{\mspace{1.5mu}0}_{\mathrm{diag}}\!+\Phi^{\mspace{1.5mu}0}_{\mathrm{off}}\mspace{-1.5mu}+\Phi_{\mathrm{reg}}\big)[f_\beta]$ is negative for any $\beta$ small enough.
        
    \end{proof}
\end{lemma}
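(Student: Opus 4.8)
The plan is to exploit that every $f_\beta$ is radial, so the three forms reduce to one-dimensional integrals that can be simultaneously diagonalized. First I would note that $\hat{f}_\beta$ depends on $\vec p$ only through $p=\lvert\vec p\rvert$ and apply the $s$-wave addition formula \eqref{sWaveAddition} to the double integrals defining $\Phi^{0}_{\mathrm{off}}$ and $\Phi_{\mathrm{reg}}$; the angular integration over $u=\tfrac{\vec p\cdot\vec q}{pq}$ produces the logarithmic kernels in \eqref{offPhiS} and \eqref{regPhiS}, while $\Phi^{0}_{\mathrm{diag}}$ collapses directly to the single radial integral \eqref{diagPhiS}. I would then diagonalize through the Mellin transform \eqref{MellinTransform}: after the substitution $p=e^t$, $q=e^s$ and absorption of the weight into $g(t)=e^{2t}\hat{f}_\beta(e^t)$, both logarithmic kernels depend only on $t-s$, so $\Phi^{0}_{\mathrm{off}}$ and $\Phi_{\mathrm{reg}}$ become convolution operators and are carried by $\mathcal M$ into multiplication by the Fourier transforms of the corresponding one-variable kernels, while $\Phi^{0}_{\mathrm{diag}}$ becomes multiplication by a constant. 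These transforms are exactly the content of \cite[lemmata 3.4, 3.5]{BCFT}, which I would invoke to reach the diagonalized expressions \eqref{diagonalizedQF}; summing them gives $\big(\Phi^{0}_{\mathrm{diag}}+\Phi^{0}_{\mathrm{off}}+\Phi_{\mathrm{reg}}\big)[f_\beta]=48\pi^2\int_{\mathbb R}\lvert\hat{f}^{\,\sharp}_\beta(x)\rvert^2\,S(x)\,dx$, with $S$ the resulting bounded, continuous symbol.

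The heart of the proof is then a sign-at-the-origin computation coupled with a concentration argument. Expanding $S$ near $x=0$ with $\sinh(cx)=cx+o(x)$ and $\cosh(cx)\to1$ yields $\lim_{x\to0}S(x)=\tfrac{\sqrt3}{2}-\tfrac{2\pi}{3}+\tfrac{\pi}{2}\gamma=\tfrac{\pi}{2}(\gamma-\gamma_c)$, which is strictly negative precisely because $\gamma<\gamma_c$; this is where the exact value \eqref{criticalGamma} enters, the two-body and regularizing contributions at the origin balancing the free part exactly at the threshold. To turn negativity of $S$ at a single point into negativity of the integral, I would use that the Mellin transform of the chosen trial function is explicit: by the integral representation of the Macdonald function \cite[p.~384, 3.547 4]{GR}, one has $\hat{f}^{\,\sharp}_\beta(x)=\tfrac1\beta\,\hat{h}(x/\beta)$ with $h(p)=e^{-\cosh p}\in\mathcal S(\mathbb R)$, so the mass of $\lvert\hat{f}^{\,\sharp}_\beta\rvert^2\,dx$ concentrates at $x=0$ as $\beta\to0^+$. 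Rescaling gives $\int_{\mathbb R}\lvert\hat{f}^{\,\sharp}_\beta(x)\rvert^2 S(x)\,dx=\tfrac1\beta\int_{\mathbb R}\lvert\hat{h}(x)\rvert^2\,S(\beta x)\,dx$, and since $S$ is bounded and continuous, dominated convergence yields $\int_{\mathbb R}\lvert\hat{h}(x)\rvert^2 S(\beta x)\,dx\to\lVert h\rVert_{L^2(\mathbb R)}^2\,S(0)<0$ as $\beta\to0^+$.

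I would close by continuity: the map $\beta\mapsto\int_{\mathbb R}\lvert\hat{h}(x)\rvert^2 S(\beta x)\,dx$ is continuous on $(0,\infty)$ and has a strictly negative limit at $0^+$, hence stays negative on some interval $(0,\beta_0)$, and multiplying by the positive factor $48\pi^2/\beta$ gives $\big(\Phi^{0}_{\mathrm{diag}}+\Phi^{0}_{\mathrm{off}}+\Phi_{\mathrm{reg}}\big)[f_\beta]<0$ for all $\beta\in(0,\beta_0)$. The main obstacle is the diagonalization step: deriving the precise hyperbolic symbols, in particular the combination producing $\tfrac{4\sinh(\pi x/6)}{x\cosh(\pi x/2)}$ for the off-diagonal part and $\tfrac{\tanh(\pi x/2)}{x}$ for the regularizing part, from the degree-zero logarithmic kernels requires genuine Mellin/contour computations; this is why I would lean on the detailed analysis of \cite{BCFT} there, whereas the subsequent concentration-and-continuity steps, together with the explicit identification of $\hat{f}^{\,\sharp}_\beta$ via the Macdonald function, are comparatively soft.
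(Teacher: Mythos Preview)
Your proposal is correct and follows essentially the same route as the paper's own proof: reduction to the $s$-wave, diagonalization via the Mellin transform citing \cite[lemmata 3.4, 3.5]{BCFT}, identification of $S(0)=\tfrac{\pi}{2}(\gamma-\gamma_c)<0$, the explicit scaling $\hat f^{\,\sharp}_\beta(x)=\tfrac1\beta\hat h(x/\beta)$ via the Macdonald representation, and the dominated-convergence-plus-continuity conclusion. The only differences are expository---you spell out the convolution structure behind the diagonalization and make the final positive factor $48\pi^2/\beta$ explicit---but the argument is the same.
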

\begin{proof}[Proof of theorem~\ref{unboundedHamiltonian}]
    Let $\hat{f}_\beta$ be the trial function given in lemma~\ref{trialFunctionExhibition} with $\beta<\beta_0$ and consider the following sequence of charges
    \begin{equation*}
        \hat{\eta}^{\,\beta}_n(\vec{p})=\frac{1}{n}\mspace{1.5mu}\hat{f}_\beta\!\left(\frac{\vec{p}}{n}\right). 
    \end{equation*}
    By lemma~\ref{leadingOrderLemma}, we know that
    $$\Phi^\lambda[\eta^{\,\beta}_n]=n^2(\Phi^0_{\mathrm{diag}}\!+\Phi^0_{\mathrm{off}}\mspace{-1.5mu}+\Phi_{\mathrm{reg}})[f_\beta]+\mathcal{O}(n),\qquad n\rightarrow +\infty$$
    and then $\Phi^\lambda[\eta^{\,\beta}_n]\longrightarrow -\infty$ as $n$ grows to infinity.
    
\end{proof}

\vs\vs
\section{Analysis in Position Space}\label{anapos}
\vs

In this section, we give a different proof of the coercivity of $\Phi^\lambda$ based on the representation of $\Phi^{\lambda}$ in position space.
In particular, this approach allows to identify the negative contribution of the quadratic form $\Phi^{\lambda}_{\mathrm{off}}$ and therefore to justify the choice of the regularization term $\Phi_{\mathrm{reg}}\+$. 

\noindent In the next proposition, we write the quadratic form $\Phi^{\lambda}$ defined in~\eqref{phiDef} in the position-space representation.
\begin{prop}\label{PhiPositionProp}
    For any $\xi\in H^{1/2}(\mathbb{R}^3)$ and $\lambda> 0$ one has
    \begin{subequations}\label{3bodyPhiPosition}
    \begin{gather}
        \Phi^\lambda_{\mathrm{diag}}[\xi]=12\pi\mspace{1.5mu}\sqrt{\lambda}\,\lVert\xi\rVert^2
        +\frac{2\sqrt{3}\mspace{1.5mu}\lambda}{\pi}\!\int_{\mathbb{R}^6}\mspace{-7.5mu}d\vec{x}d\vec{y}\;\frac{\lvert\xi(\vec{x})-\xi(\vec{y})\rvert^2\mspace{-4.5mu}}{\lvert\vec{x}-\vec{y}\rvert^2}\,\mathrm{K}_2\!\left(\sqrt{\tfrac{4\lambda}{3}}\mspace{1.5mu}\lvert\vec{x}-\vec{y}\rvert\right)\!,\label{diagPhiPosition}\\
        \Phi^\lambda_{\mathrm{off}}[\xi]=-\frac{8\sqrt{3}\mspace{1.5mu}\lambda}{\pi}\!\int_{\mathbb{R}^6}\mspace{-7.5mu}d\vec{x}d\vec{y}\;\frac{\overline{\xi(\vec{x})}\,\xi(\vec{y})}{y^2+x^2+\vec{x}\mspace{-1.5mu}\cdot\mspace{-1.5mu}\vec{y}}\,\mathrm{K}_2\!\left(\sqrt{\tfrac{4\lambda}{3}}\,\textstyle{\sqrt{y^2+x^2+\vec{x}\mspace{-1.5mu}\cdot\mspace{-1.5mu}\vec{y}}}\right)\!,\label{offPhiPosition}\\
        \Phi_{\mathrm{reg}}[\xi]=12\pi\mspace{1.5mu}\gamma\!\int_{\mathbb{R}^3}\mspace{-7.5mu}d\vec{x}\;\frac{\lvert\xi(\vec{x})\rvert^2\mspace{-4.5mu}}{\lvert\vec{x}\rvert}\label{regPhiPosition}
    \end{gather}
    \end{subequations}
    where $\mathrm{K}_\nu:\mathbb{R}_+\longrightarrow\mathbb{C}$ is the modified Bessel function of the second kind (also known as Macdonald function) and order $\nu\in\mathbb{C}$.
\begin{proof}
    Identity~\eqref{diagPhiPosition} is a consequence of~\eqref{diagPhi} and~\cite[section 7.12, (5)]{LL}, while~\eqref{regPhiPosition} is obtained by comparing~\eqref{regPhi} with the  identity
    \begin{equation}\label{distributionalFourier}
        \int_{\mathbb{R}^3}\mspace{-7.5mu}d\vec{r}\;\frac{\lvert f(\vec{r})\rvert^2\!}{r}=\frac{1}{2\pi^2\!}\int_{\mathbb{R}^6}\mspace{-7.5mu}d\vec{p}\+d\vec{q}\;\frac{\overline{\hat{f}(\vec{p})}\,\hat{f}(\vec{q})}{\lvert\vec{p}-\vec{q}\rvert^2},\qquad\forall\, f\in H^{1/2}(\mathbb{R}^3).
    \end{equation}
    Concerning the proof of~\eqref{offPhiPosition}, we  consider~\eqref{offPhi} for $\varphi\-\in\-\mathcal{S}(\mathbb{R}^3)$ and we observe that we have uniformly in $\lambda> 0$ 
    \begin{equation*}
        \vec{\sigma}\longmapsto\frac{1}{\sigma^2+\tau^2+\vec{\tau}\!\cdot\mspace{-1.5mu}\vec{\sigma}+\lambda}\in L^2(\mathbb{R}^3,d\vec{\sigma}),\qquad \text{for  } \tau \neq 0.
    \end{equation*}
    Therefore, by Plancherel's theorem we find
    \begin{equation*}
        \Phi^\lambda_{\mathrm{off}}[\varphi]=-\frac{12}{\pi}\!\int_{\mathbb{R}^3}\mspace{-7.5mu}d\vec{\tau}\:\overline{\hat{\varphi}(\vec{\tau})}\!\int_{\mathbb{R}^3}\mspace{-7.5mu}d\vec{x}\;\frac{\varphi(\vec{x})}{(2\pi)^{3/2}\!}\int_{\mathbb{R}^3}\mspace{-7.5mu}d\vec{\sigma}\;\frac{e^{-i\,\vec{x}\cdot\mspace{1.5mu}\vec{\sigma}}}{\tau^2+\sigma^2+\vec{\tau}\!\cdot\mspace{-1.5mu}\vec{\sigma}+\lambda}.
    \end{equation*}
    Using the change of coordinates  $\vec{\sigma}=\vec{q}-\frac{\vec{\tau}}{2}$, we obtain
    \begin{align*}
        \Phi^\lambda_{\mathrm{off}}[\varphi]&=-\frac{12}{\pi}\!\int_{\mathbb{R}^3}\mspace{-7.5mu}d\vec{\tau}\:\overline{\hat{\varphi}(\vec{\tau})}\!\int_{\mathbb{R}^3}\mspace{-7.5mu}d\vec{x}\;\frac{\:e^{\frac{\vec{\tau}\mspace{1.5mu}\cdot\,\vec{x}}{2}i}\!}{(2\pi)^{3/2}\!}\:\varphi(\vec{x})\!\int_{\mathbb{R}^3}\mspace{-7.5mu}d\vec{q}\;\frac{e^{-i\,\vec{q}\cdot\vec{x}}}{\frac{3}{4}\tau^2+q^2+\lambda}\\
        &=-\frac{24\pi}{\:(2\pi)^{3/2}\!}\int_{\mathbb{R}^3}\mspace{-7.5mu}d\vec{\tau}\:\overline{\hat{\varphi}(\vec{\tau})}\!\int_{\mathbb{R}^3}\mspace{-7.5mu}d\vec{x}\;\frac{\varphi(\vec{x})}{\lvert\vec{x}\rvert}\,e^{\frac{\vec{\tau}\mspace{1.5mu}\cdot\,\vec{x}}{2}i\,-\sqrt{\frac{3}{4}\tau^2\,+\,\lambda}\:\lvert\vec{x}\rvert} \\
        &=-\frac{24\pi}{\:(2\pi)^{3/2}\!}\int_{\mathbb{R}^3}\mspace{-7.5mu}d\vec{x}\;\frac{\varphi(\vec{x})}{\lvert\vec{x}\rvert}\!\int_{\mathbb{R}^3}\mspace{-7.5mu}d\vec{\tau}\:\overline{\hat{\varphi}(\vec{\tau})}\,e^{\frac{\vec{\tau}\mspace{1.5mu}\cdot\,\vec{x}}{2}i\,-\sqrt{\frac{3}{4}\tau^2\,+\,\lambda}\:\lvert\vec{x}\rvert}.
    \end{align*}
    Since uniformly in $\lambda> 0$
    \begin{equation*}
        \vec{\tau}\longmapsto e^{\frac{\vec{\tau}\mspace{1.5mu}\cdot\,\vec{x}}{2}\,i\,-\sqrt{\frac{3}{4}\tau^2\,+\,\lambda}\:\lvert\vec{x}\rvert}\in L^2(\mathbb{R}^3,d\vec{\tau}),\qquad \text{for } x \neq 0
    \end{equation*}
    we use again Plancherel's theorem to obtain
    \begin{align*}
        \Phi^\lambda_{\mathrm{off}}[\varphi]&=-\frac{24\pi}{(2\pi)^3}\!\int_{\mathbb{R}^3}\mspace{-7.5mu}d\vec{x}\;\frac{\varphi(\vec{x})}{\lvert\vec{x}\rvert}\!\int_{\mathbb{R}^3}\mspace{-7.5mu}d\vec{y}\:\overline{\varphi(\vec{y})}\!\int_{\mathbb{R}^3}\mspace{-7.5mu}d\vec{\tau}\:e^{i\mspace{1.5mu}\vec{\tau}\cdot\left(\vec{y}\mspace{1.5mu}+\mspace{1.5mu}\frac{\vec{x}}{2}\!\right)\,-\sqrt{\frac{3}{4}\tau^2\,+\,\lambda}\:\lvert\vec{x}\rvert} \\
        &=-\frac{12}{\pi}\!\int_{\mathbb{R}^3}\mspace{-7.5mu}d\vec{x}\;\frac{\varphi(\vec{x})}{\lvert\vec{x}\rvert}\!\int_{\mathbb{R}^3}\mspace{-7.5mu}d\vec{y}\;\frac{\overline{\varphi(\vec{y})}}{\lvert\vec{y}+\frac{\vec{x}}{2}\rvert}\int_0^{+\infty}\mspace{-24mu}d\tau\:\tau\sin\!\left(\tau\lvert\vec{y}+\tfrac{\vec{x}}{2}\rvert\right)e^{-\sqrt{\frac{3}{4}\tau^2\,+\,\lambda}\:\lvert\vec{x}\rvert}.
    \end{align*}
    The last integral can be explicitly computed using the formula (see, e.g., \cite[p. 491, 3.914.6]{GR})
    \begin{equation}
        \int_0^{+\infty}\mspace{-24mu} dx\:x\mspace{1.5mu}\sin(b \mspace{1.5mu}x)\,e^{-\beta\mspace{1.5mu}\sqrt{x^2\,+\,\gamma^2}}\!=\frac{b\mspace{1.5mu}\beta\mspace{1.5mu}\gamma^2}{\beta^2\mspace{-1.5mu}+b^2}\,\mathrm{K}_2\left(\gamma\sqrt{\beta^2\mspace{-1.5mu}+b^2}\right)\!,\quad\forall\, b\in\mathbb{R}\mspace{1.5mu}\text{ and }\beta,\gamma>0
    \end{equation}
    and therefore      identity~\eqref{offPhiPosition} is proven for $\varphi\!\in\mspace{-1.5mu}\mathcal{S}(\mathbb{R}^3)$.
    By a density argument\footnote{Because of propositions~\ref{negativeQFContribution} and~\ref{singularBehaviorPositionRep}, one can obtain a control in the $H^{1/2}$ norm.} the result is extended to any $\varphi\mspace{-1.5mu}\in\! H^{1/2}(\mathbb{R}^3)$.
    
\end{proof}
\end{prop}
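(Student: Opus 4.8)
The plan is to establish the three identities in \eqref{3bodyPhiPosition} one component at a time, disposing of the diagonal and the regularizing term by direct Fourier identities and reserving the genuine work for $\Phi^\lambda_{\mathrm{off}}$.

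For $\Phi^\lambda_{\mathrm{diag}}$ I would write $\sqrt{\tfrac34 p^2+\lambda}=\sqrt{\tfrac34}\,\sqrt{p^2+\tfrac{4\lambda}{3}}$ and invoke the classical fact that the quadratic form of $(-\Delta+\mu^2)^{1/2}$ on $H^{1/2}(\mathbb{R}^3)$ decomposes as $\mu\,\lVert\xi\rVert^2$ plus a Gagliardo-type nonlocal term whose kernel is the inverse Fourier transform of $\sqrt{p^2+\mu^2}-\mu$; this is precisely the identity recorded in \cite[section 7.12, (5)]{LL}, and evaluating it at $\mu=\sqrt{4\lambda/3}$ produces both the $\mathrm{K}_2$ kernel and, from $12\pi\sqrt{\tfrac34}\,\mu=12\pi\sqrt\lambda$, the local term of \eqref{diagPhiPosition}. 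For $\Phi_{\mathrm{reg}}$ I would simply match \eqref{regPhi} against the Coulomb-kernel identity \eqref{distributionalFourier} (itself nothing but $\widehat{1/|{\cdot}|}\propto 1/|{\cdot}|^2$ combined with Plancherel), obtaining $\Phi_{\mathrm{reg}}[\xi]=\tfrac{6\gamma}{\pi}\cdot 2\pi^2\int_{\mathbb{R}^3}|\xi(\vec x)|^2/|\vec x|\,d\vec x=12\pi\gamma\int_{\mathbb{R}^3}|\xi(\vec x)|^2/|\vec x|\,d\vec x$, which is \eqref{regPhiPosition}.

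The substantive step is \eqref{offPhiPosition}, which I would first prove for $\varphi\in\mathcal S(\mathbb{R}^3)$ by a double application of Plancherel's theorem. Fixing $\vec\tau=\vec p\neq\vec 0$, the map $\vec\sigma\mapsto(\tau^2+\sigma^2+\vec\tau\cdot\vec\sigma+\lambda)^{-1}$ lies in $L^2(\mathbb{R}^3,d\vec\sigma)$ uniformly in $\lambda>0$; completing the square $\sigma^2+\vec\tau\cdot\vec\sigma=|\vec\sigma+\tfrac12\vec\tau|^2-\tfrac14\tau^2$ and shifting $\vec\sigma=\vec q-\tfrac12\vec\tau$ turns the denominator into $q^2+\tfrac34\tau^2+\lambda$, whose inverse Fourier transform is the Yukawa kernel $\propto e^{-\sqrt{3\tau^2/4+\lambda}\,|\vec x|}/|\vec x|$ decorated by the phase $e^{i\vec\tau\cdot\vec x/2}$ coming from the shift. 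A second Plancherel in $\vec\tau$ — legitimate since $\vec\tau\mapsto e^{i\vec\tau\cdot\vec x/2-\sqrt{3\tau^2/4+\lambda}\,|\vec x|}$ belongs to $L^2(\mathbb{R}^3,d\vec\tau)$ for $\vec x\neq\vec 0$ — produces a $\vec y$-integral carrying the phase $e^{i\vec\tau\cdot(\vec y+\vec x/2)}$; performing the angular integration reduces the $\vec\tau$-integral to $\int_0^{+\infty}\tau\sin(\tau|\vec y+\tfrac12\vec x|)\,e^{-\sqrt{3\tau^2/4+\lambda}\,|\vec x|}\,d\tau$. I would then evaluate this by \cite[p.~491, 3.914.6]{GR} with $\beta=\sqrt{3/4}\,|\vec x|$, $\gamma^2=4\lambda/3$ and $b=|\vec y+\tfrac12\vec x|$, so that $\beta^2+b^2=x^2+y^2+\vec x\cdot\vec y$; the prefactor $b\beta\gamma^2/(\beta^2+b^2)$ absorbs the two factors $|\vec x|^{-1}$ and $|\vec y+\tfrac12\vec x|^{-1}$ coming from the Yukawa kernels, and collecting all numerical constants yields exactly the kernel and the coefficient $-8\sqrt3\lambda/\pi$ of \eqref{offPhiPosition}.

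The only point left is the passage from $\mathcal S(\mathbb{R}^3)$ to $H^{1/2}(\mathbb{R}^3)$, which I would obtain by density once both sides of each identity are shown to be continuous in the $H^{1/2}$ norm: the left-hand sides by \cite[proposition 3.1]{BCFT}, and the position-space right-hand sides by Hardy-type and Macdonald-asymptotics bounds of the same nature as those appearing in the subsequent propositions. I expect the main technical hazard to be precisely the bookkeeping inside the double Plancherel argument — tracking the powers of $(2\pi)$ and the $\sqrt{3/4}$ factors through the change of variables and the Gradshteyn--Ryzhik formula, and justifying the interchanges of integration and the two $L^2$-duality pairings when $\varphi$ is only of Schwartz class, since the integrand is not absolutely integrable and the relevant multipliers are $L^2$ only off the null sets $\{\vec\tau=\vec 0\}$ and $\{\vec x=\vec 0\}$.
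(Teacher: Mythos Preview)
Your proposal is correct and follows essentially the same route as the paper's own proof: the diagonal term via \cite[section 7.12, (5)]{LL}, the regularizing term via the Coulomb identity \eqref{distributionalFourier}, and the off-diagonal term via the identical double Plancherel argument (same shift $\vec\sigma=\vec q-\tfrac12\vec\tau$, same Yukawa kernel, same angular integration, same Gradshteyn--Ryzhik entry with the same parameter identification), followed by density. Your anticipation of the $H^{1/2}$-continuity needed for the density step matches what the paper defers to propositions~\ref{negativeQFContribution} and~\ref{singularBehaviorPositionRep}.
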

\noindent Before proceeding, let us briefly recall some elementary properties of $\mathrm{K}_2(\cdot)$:
\begin{subequations}
\begin{gather}
    x^2\mspace{1.5mu}\mathrm{K}_2(x) \text{ is decreasing in }x>0,\label{decreasingMacDonald}\\
    \mathrm{K}_2(x)=\sqrt{\frac{\pi}{2}}\,e^{-x}\left[\frac{1}{\!\sqrt{x}\:}+\mathcal{O}\!\left(\frac{1}{x^{3/2}\!}\right)\mspace{-1.5mu}\right]\!,\qquad \text{as }x\rightarrow +\infty,\\
    \mathrm{K}_2(x)=\frac{2}{\,x^2\!}-\frac{1}{2}+\mathcal{O}\!\left(x^2\ln{x}\right)\!,\qquad \text{as }x\rightarrow 0^+\!.\label{originMacDonald}
\end{gather}
In particular, notice that~\eqref{decreasingMacDonald} and~\eqref{originMacDonald} imply
\begin{equation}
    \mathrm{K}_2(x)\leq \frac{2}{\,x^2\!}\,,\qquad \forall\, x>0.\label{upperBoundMacDonald}
\end{equation}
\end{subequations}
In the next proposition we show the relevant fact that the negative contribution of $\Phi^\lambda_{\mathrm{off}}$ can be explicitly characterized.
\begin{prop}\label{negativeQFContribution}
    For any  $\varphi\in H^{1/2}(\mathbb{R}^3)$ and $\lambda> 0$ one has
    \begin{equation}\label{singularBehaviorExtracted}
        \begin{split}
        \Phi^\lambda_{\mathrm{off}}[\varphi]=&\,-24\pi\!\int_{\mathbb{R}^3}\mspace{-7.5mu}d\vec{x}\;\frac{\;e^{-\sqrt{\lambda}\,\lvert\vec{x}\rvert}\mspace{-7.5mu}}{\lvert\vec{x}\rvert}\:\lvert\varphi(\vec{x})\rvert^2+\\
        &+\frac{4\sqrt{3}\mspace{1.5mu}\lambda}{\pi}\!\int_{\mathbb{R}^6}\mspace{-7.5mu}d\vec{x}d\vec{y}\;\frac{\lvert\varphi(\vec{x})-\varphi(\vec{y})\rvert^2}{y^2+x^2+\vec{x}\mspace{-1.5mu}\cdot\mspace{-1.5mu}\vec{y}}\,\mathrm{K}_2\!\left(\sqrt{\tfrac{4\lambda}{3}}\,\textstyle{\sqrt{y^2+x^2+\vec{x}\mspace{-1.5mu}\cdot\mspace{-1.5mu}\vec{y}}}\right)\!.
        \end{split}
    \end{equation}
    \begin{proof}
        Let us decompose the expression given by~\eqref{offPhiPosition} as follows
        \begin{equation*}
        \begin{split}
            \Phi^\lambda_{\mathrm{off}}[\varphi]=-\frac{8\sqrt{3}\mspace{1.5mu}\lambda}{\pi}\,\Bigg[\!\int_{\mathbb{R}^3}\mspace{-7.5mu}d\vec{y}\:\lvert\varphi(\vec{y})\rvert^2\!\int_{\mathbb{R}^3}\mspace{-7.5mu}d\vec{x}\;\frac{\mathrm{K}_2\Big(\sqrt{\tfrac{4\lambda}{3}}\,\textstyle{\sqrt{y^2+x^2+\vec{x}\!\cdot\!\vec{y}}}\Big)}{y^2+x^2+\vec{x}\!\cdot\!\vec{y}}\,+\\
            +\!\int_{\mathbb{R}^3}\mspace{-7.5mu}d\vec{y}\:\overline{\varphi(\vec{y})}\!\int_{\mathbb{R}^3}\mspace{-7.5mu}d\vec{x}\;\frac{\varphi(\vec{x})-\varphi(\vec{y})}{y^2+x^2+\vec{x}\!\cdot\!\vec{y}}\,\mathrm{K}_2\!\left(\sqrt{\tfrac{4\lambda}{3}}\,\textstyle{\sqrt{y^2+x^2+\vec{x}\!\cdot\!\vec{y}}}\right)\!\Bigg],
        \end{split}
        \end{equation*}
        Then, we evaluate the first term in the right hand side.
        In proposition~\ref{PhiPositionProp} we have seen that the function
        \begin{equation}
        \begin{split}
            &\hat{f}^\lambda_x:\mathbb{R}^3\longrightarrow\mathbb{R}_+,\qquad x,\lambda\in\mathbb{R}_+,\\
            &\mspace{27mu}\vec{\tau}\longmapsto \frac{e^{-x\mspace{1.5mu}\sqrt{\frac{3}{4}\tau^2\,+\,\lambda}}}{x}
        \end{split}
        \end{equation}
        is such that
        \begin{equation}
            f^\lambda_{\lvert\vec{x}\rvert}\!\left(\vec{y}+\frac{\vec{x}}{2}\right)=\sqrt{\frac{8}{3\pi}}\,\lambda\,\frac{\mathrm{K}_2\Big(\sqrt{\frac{4\lambda}{3}}\,\textstyle{\sqrt{y^2+x^2+\vec{x}\!\cdot\!\vec{y}}}\,\Big)}{y^2+x^2+\vec{x}\!\cdot\!\vec{y}}.
        \end{equation}
        Notice the symmetry in the exchange $\vec{x}\longleftrightarrow\vec{y}$.
        Then,
        \begin{equation*}
            \int_{\mathbb{R}^3}\mspace{-7.5mu}d\vec{x}\:f^\lambda_{\lvert\vec{x}\rvert}\!\left(\vec{y}+\tfrac{\vec{x}}{2}\right)=\!\int_{\mathbb{R}^3}\mspace{-7.5mu}d\vec{x}\:f^\lambda_{\lvert\vec{y}\rvert}\!\left(\vec{x}+\tfrac{\vec{y}}{2}\right)=\!\int_{\mathbb{R}^3}\mspace{-7.5mu}d\vec{z}\:f^\lambda_{\lvert\vec{y}\rvert}(\vec{z})=(2\pi)^{3/2}\hat{f}^\lambda_{\lvert\vec{y}\rvert}(0).
        \end{equation*}
        Therefore we find
        \begin{equation}\label{OldMcDonaldHadYukawa}
            \frac{\lambda}{\!\sqrt{3}\pi^2}\!\int_{\mathbb{R}^3}\mspace{-7.5mu}d\vec{x}\;\frac{\mathrm{K}_2\Big(\sqrt{\tfrac{4\lambda}{3}}\,\textstyle{\sqrt{y^2+x^2+\vec{x}\!\cdot\!\vec{y}}}\,\Big)}{y^2+x^2+\vec{x}\!\cdot\!\vec{y}}=\frac{\:e^{-\sqrt{\lambda}\mspace{1.5mu}\lvert\vec{y}\rvert}\!\!\!}{\lvert\vec{y}\rvert},\qquad\forall\, \lambda> 0,\,\vec{y}\in\mathbb{R}^3\smallsetminus\{\vec{0}\}.
        \end{equation}
        According to~\eqref{OldMcDonaldHadYukawa}, we obtain 
        \begin{equation}\label{singularBehaviorAlmostExtracted}
        \begin{split}
            \Phi^\lambda_{\mathrm{off}}[\varphi]=&-24\pi\!\int_{\mathbb{R}^3}\mspace{-7.5mu}d\vec{y}\:\lvert\varphi(\vec{y})\rvert^2\,\frac{\:e^{-\sqrt{\lambda}\mspace{1.5mu}\lvert\vec{y}\rvert}\!\!\!}{\lvert\vec{y}\rvert}\;+\\
            &+\frac{8\sqrt{3}\mspace{1.5mu}\lambda}{\pi}\!\int_{\mathbb{R}^6}\mspace{-7.5mu}d\vec{x}d\vec{y}\;\frac{\overline{\varphi(\vec{y})}\,[\varphi(\vec{y})-\varphi(\vec{x})]}{y^2+x^2+\vec{x}\!\cdot\!\vec{y}}\,\mathrm{K}_2\Big(\sqrt{\tfrac{4\lambda}{3}}\,\textstyle{\sqrt{y^2+x^2+\vec{x}\!\cdot\!\vec{y}}}\,\Big).
        \end{split}
        \end{equation}
        It is now sufficient to notice that the symmetry in exchanging $\vec{x}\longleftrightarrow\vec{y}$ allows us to write
        \begin{align*}
            \int_{\mathbb{R}^3}\mspace{-7.5mu}d\vec{y}\:\overline{\varphi(\vec{y})}\!\int_{\mathbb{R}^3}\mspace{-7.5mu}d\vec{x}\;&\frac{\varphi(\vec{y})-\varphi(\vec{x})}{y^2+x^2+\vec{x}\!\cdot\!\vec{y}}\,\mathrm{K}_2\Big(\sqrt{\tfrac{4\lambda}{3}}\,\textstyle{\sqrt{y^2+x^2+\vec{x}\!\cdot\!\vec{y}}}\,\Big)=\\
            &=\frac{1}{2}\int_{\mathbb{R}^6}\mspace{-7.5mu}d\vec{x}d\vec{y}\;\frac{\lvert\varphi(\vec{y})-\varphi(\vec{x})\rvert^2\!}{y^2+x^2+\vec{x}\!\cdot\!\vec{y}}\,\mathrm{K}_2\Big(\sqrt{\tfrac{4\lambda}{3}}\,\textstyle{\sqrt{y^2+x^2+\vec{x}\!\cdot\!\vec{y}}}\,\Big).
        \end{align*}
        and the proposition is proved.
        
    \end{proof}
\end{prop}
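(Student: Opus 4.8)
The plan is to read~\eqref{singularBehaviorExtracted} off directly from the position-space formula~\eqref{offPhiPosition} for $\Phi^\lambda_{\mathrm{off}}$ proved in Proposition~\ref{PhiPositionProp}, by isolating the on-diagonal part of its kernel. Write $Q(\vec x,\vec y):=x^2+y^2+\vec x\cdot\vec y$ and note that
\[
Q(\vec x,\vec y)=\tfrac34\lvert\vec x\rvert^2+\big\lvert\vec y+\tfrac12\vec x\big\rvert^2=\tfrac34\lvert\vec y\rvert^2+\big\lvert\vec x+\tfrac12\vec y\big\rvert^2>0,
\]
so both $Q$ and the kernel $K^\lambda(\vec x,\vec y):=Q(\vec x,\vec y)^{-1}\mathrm{K}_2\!\big(\sqrt{4\lambda/3}\,\sqrt{Q(\vec x,\vec y)}\big)$ are real, positive and symmetric under $\vec x\leftrightarrow\vec y$. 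Since $\Phi^\lambda_{\mathrm{off}}[\varphi]\in\mathbb R$ and $K^\lambda$ is symmetric, only the symmetric part of $\overline{\varphi(\vec x)}\varphi(\vec y)$ contributes (the antisymmetric one pairs to zero against the symmetric kernel); using then the polarisation identity $\mathrm{Re}\,\overline{\varphi(\vec x)}\varphi(\vec y)=\tfrac12\big(\lvert\varphi(\vec x)\rvert^2+\lvert\varphi(\vec y)\rvert^2-\lvert\varphi(\vec x)-\varphi(\vec y)\rvert^2\big)$ together with the $\vec x\leftrightarrow\vec y$ symmetry of $K^\lambda$, I would rewrite~\eqref{offPhiPosition} as
\[
\Phi^\lambda_{\mathrm{off}}[\varphi]=-\frac{8\sqrt3\,\lambda}{\pi}\int_{\mathbb R^3}\!d\vec y\;\lvert\varphi(\vec y)\rvert^2\!\int_{\mathbb R^3}\!d\vec x\;K^\lambda(\vec x,\vec y)\;+\;\frac{4\sqrt3\,\lambda}{\pi}\int_{\mathbb R^6}\!d\vec x\,d\vec y\;K^\lambda(\vec x,\vec y)\,\lvert\varphi(\vec x)-\varphi(\vec y)\rvert^2 .
\]
The second summand is exactly the positive remainder in~\eqref{singularBehaviorExtracted} (it is $\ge 0$ since $\mathrm{K}_2>0$ on $(0,\infty)$), so the whole proposition reduces to the pointwise identity
\[
\frac{\lambda}{\sqrt3\,\pi^2}\int_{\mathbb R^3}d\vec x\;K^\lambda(\vec x,\vec y)=\frac{e^{-\sqrt\lambda\,\lvert\vec y\rvert}}{\lvert\vec y\rvert}\,,\qquad \vec y\in\mathbb R^3\setminus\{\vec 0\}.
\]

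To prove this identity, the translation $\vec x\mapsto\vec x-\tfrac12\vec y$ turns its left-hand side into the radial integral $\tfrac{4\lambda}{\sqrt3\,\pi}\int_0^\infty v^2\big(v^2+\tfrac34 y^2\big)^{-1}\mathrm{K}_2\!\big(\sqrt{4\lambda/3}\,\sqrt{v^2+\tfrac34 y^2}\big)\,dv$, which could in principle be evaluated from tables of Macdonald-function integrals; I would rather argue as follows, recognising $K^\lambda$ as a Fourier transform and exploiting once more the symmetry of $Q$. Let $f^\lambda_x$ be the function on $\mathbb R^3$ whose (radial) Fourier transform is $\hat{f}^\lambda_x(\vec\tau)=x^{-1}e^{-x\sqrt{3\tau^2/4+\lambda}}$. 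Computing the inverse Fourier transform of $\hat{f}^\lambda_x$ as a one-dimensional sine transform and invoking \cite[p.~491, 3.914.6]{GR} — which is exactly the calculation carried out in the proof of Proposition~\ref{PhiPositionProp}, run backwards — one finds
\[
f^\lambda_{\lvert\vec x\rvert}\!\Big(\vec y+\tfrac12\vec x\Big)=\sqrt{\tfrac{8}{3\pi}}\,\lambda\,K^\lambda(\vec x,\vec y),
\]
whose right-hand side depends on $(\vec x,\vec y)$ only through $Q(\vec x,\vec y)$ and is therefore symmetric under $\vec x\leftrightarrow\vec y$. Consequently
\[
\int_{\mathbb R^3}\!d\vec x\;f^\lambda_{\lvert\vec x\rvert}\!\Big(\vec y+\tfrac12\vec x\Big)=\int_{\mathbb R^3}\!d\vec x\;f^\lambda_{\lvert\vec y\rvert}\!\Big(\vec x+\tfrac12\vec y\Big)=\int_{\mathbb R^3}\!d\vec z\;f^\lambda_{\lvert\vec y\rvert}(\vec z)=(2\pi)^{3/2}\,\hat{f}^\lambda_{\lvert\vec y\rvert}(\vec 0)=(2\pi)^{3/2}\,\frac{e^{-\sqrt\lambda\,\lvert\vec y\rvert}}{\lvert\vec y\rvert},
\]
using the translation $\vec z=\vec x+\tfrac12\vec y$ in the second equality and Fourier inversion in the fourth. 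Matching the constants yields the pointwise identity, and substituting it into the first summand above turns that term into $-24\pi\int_{\mathbb R^3}e^{-\sqrt\lambda\lvert\vec x\rvert}\lvert\vec x\rvert^{-1}\lvert\varphi(\vec x)\rvert^2\,d\vec x$, i.e.\ precisely~\eqref{singularBehaviorExtracted}.

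I would run this computation first for $\varphi\in\mathcal S(\mathbb R^3)$, where all the applications of Fubini's theorem and of the splitting above are unambiguous, and then extend to general $\varphi\in H^{1/2}(\mathbb R^3)$ by the density argument already used for~\eqref{offPhiPosition} — which, as noted in the proof of Proposition~\ref{PhiPositionProp}, is performed jointly with Proposition~\ref{singularBehaviorPositionRep} and relies on the $H^{1/2}$-continuity of both terms on the right-hand side of~\eqref{singularBehaviorExtracted}. The step I expect to be the real obstacle is the pointwise integral identity: $\int_{\mathbb R^3}K^\lambda(\vec x,\vec y)\,d\vec x$ is not visibly a Yukawa potential, and what makes it computable is the observation that, because $Q$ is a \emph{symmetric} quadratic form in $(\vec x,\vec y)$, the $\vec x$-integration can be traded for evaluating the Fourier transform $\hat{f}^\lambda_{\lvert\vec y\rvert}$ at zero frequency. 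Everything else — the symmetrisation that isolates the diagonal contribution and at the same time displays the remainder as non-negative, and the $\mathcal S\to H^{1/2}$ passage — is routine.
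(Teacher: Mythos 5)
Your proposal is correct and follows essentially the same route as the paper: the same splitting of the kernel's action into a diagonal (Yukawa) part plus a non-negative Gagliardo-type remainder (your polarisation identity is just a repackaging of the paper's add-and-subtract plus symmetrisation), and the same key step of evaluating $\int_{\mathbb{R}^3} d\vec{x}\, K^\lambda(\vec{x},\vec{y})$ by recognising the kernel as $f^\lambda_{\lvert\vec{x}\rvert}(\vec{y}+\tfrac12\vec{x})$, exploiting the $\vec{x}\leftrightarrow\vec{y}$ symmetry, and reducing to $\hat{f}^\lambda_{\lvert\vec{y}\rvert}(\vec{0})$. The constants check out, so no further comment is needed.
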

\noindent Thanks to proposition~\ref{negativeQFContribution}, it is not hard to find lower and upper bounds for $\Phi^\lambda$.
To this end, it is convenient to introduce the Gagliardo semi-norm of the Sobolev space $H^{1/2}(\mathbb{R}^3)$, defined as
\begin{equation}\label{GagliardoHilbertSobolev}
    [\mspace{1.5mu}u\mspace{1.5mu}]^2_{\frac{1}{2}}\!:=\!\int_{\mathbb{R}^6}\mspace{-7.5mu}d\vec{x}d\vec{y}\;\frac{\lvert u(\vec{x})\--u(\vec{y})\rvert^2\!\!}{\;\lvert\vec{x}-\vec{y}\rvert^4},\qquad u\in H^{1/2}(\mathbb{R}^3),
\end{equation}
so that $\lVert u\rVert_{H^{1/2}(\mathbb{R}^{3})}^2\!=\lVert u\rVert^2\!+[\mspace{1.5mu}u\mspace{1.5mu}]^2_{\frac{1}{2}}$.
In terms of the Fourier transform of $u$ we also have (see e.g.,~\cite[proposition 3.4]{DNPV})
\begin{equation}\label{GagliardoHilbertSobolevFourier}
   [\mspace{1.5mu}u\mspace{1.5mu}]^2_{\frac{1}{2}}\!=2\pi^2\mspace{-4.5mu}\int_{\mathbb{R}^3}\mspace{-7.5mu}d\vec{k}\:\lvert\vec{k}\rvert\mspace{1.5mu}\lvert\hat{u}(\vec{k})\rvert^2.
\end{equation}
\begin{prop}\label{singularBehaviorPositionRep} For any given $\varphi\in H^{1/2}(\mathbb{R}^3)$, one has
\begin{subequations}\label{ThetaBoundsPosition}
\begin{gather}
    \Phi^\lambda[\varphi]\geq\Phi^\lambda_{\mathrm{diag}}[\varphi]+3\min\{0,\gamma-2\}[\mspace{1.5mu}\varphi\mspace{1.5mu}]^2_{\frac{1}{2}},\label{ThetaLowerBoundPosition}\\
    \Phi^\lambda[\varphi]\leq\Phi^\lambda_{\mathrm{diag}}[\varphi]+\left(3\gamma+\tfrac{96\sqrt{3}}{\pi}\right)\![\mspace{1.5mu}\varphi\mspace{1.5mu}]^2_{\frac{1}{2}} \label{ThetaUpperBoundPosition}.
\end{gather}
\end{subequations}
\begin{proof}
    The lower bound is obtained by neglecting the positive part of $\Phi^\lambda_{\mathrm{off}}$
    \begin{align*}
        \Phi^\lambda_{\mathrm{off}}[\varphi]+\Phi_{\mathrm{reg}}[\varphi]\geq 12\pi\!\int_{\mathbb{R}^3}\mspace{-7.5mu}d\vec{y}\;\frac{\lvert\varphi(\vec{y})\rvert^2\!}{\lvert\vec{y}\rvert}\left(\gamma-2\mspace{1.5mu}e^{-\sqrt{\lambda}\,\lvert\vec{y}\rvert}\right)
    \end{align*}
    and by considering the following inequalities
    \begin{gather}
        \inf_{\vec{y}\in\,\mathbb{R}^3}\!\left\{\gamma-2\mspace{1.5mu}e^{-\sqrt{\lambda}\,\lvert\vec{y}\rvert}\right\}\!\geq \min\{0,\gamma-2\},\\[-2.5pt]
        \int_{\mathbb{R}^3}\mspace{-7.5mu}d\vec{x}\;\frac{\lvert\varphi(\vec{x})\rvert^2\!}{\lvert\vec{x}\rvert}\leq \frac{1}{4\pi}[\mspace{1.5mu}\varphi\mspace{1.5mu}]^2_{\frac{1}{2}}.\label{singularityIsGagliardo}%\\
%        [\mspace{1.5mu}\varphi\mspace{1.5mu}]^2_{\frac{1}{2}}\leq \tfrac{\pi}{3\sqrt{3}}\,\Phi^\lambda_{\mathrm{diag}}[\varphi],\qquad\forall \lambda\geq 0.\label{diagGagliardo}
    \end{gather}
    Notice that~\eqref{singularityIsGagliardo} is a consequence of the Hardy-Rellich inequality (see\footnote{There is a typo in~\cite[equation~(1.4)]{Y}: a power $2$ on the Euler Gamma function in the numerator is missing.}~\cite{Y})
    \begin{equation}
        \int_{\mathbb{R}^3}\mspace{-7.5mu}d\vec{x}\;\frac{\lvert u(\vec{x})\rvert^2\!}{\lvert\vec{x}\rvert}\leq\frac{\pi}{2} \int_{\mathbb{R}^3}\mspace{-7.5mu}d\vec{k}\:\lvert\vec{k}\rvert\mspace{1.5mu}\lvert\hat{u}(\vec{k})\rvert^2,\qquad \forall\,u\in H^{1/2}(\mathbb{R}^3)
    \end{equation}
    compared with~\eqref{GagliardoHilbertSobolevFourier}.
    %, while~\eqref{diagGagliardo} is implied by~\eqref{GagliardoHilbertSobolevFourier} and~\eqref{diagPhi}.
    In order to obtain the upper bound, we recall~\eqref{GagliardoHilbertSobolev} to get
    \begin{align*}
        \int_{\mathbb{R}^6}\mspace{-7.5mu}d\vec{x}d\vec{y}\;\frac{\lvert\varphi(\vec{y})-\varphi(\vec{x})\rvert^2\!}{y^2+x^2+\vec{x}\!\cdot\!\vec{y}}&\,\mathrm{K}_2\Big(\sqrt{\tfrac{4\lambda}{3}}\,\textstyle{\sqrt{y^2+x^2+\vec{x}\!\cdot\!\vec{y}}}\,\Big)\!\leq\\
        &\leq[\mspace{1.5mu}\varphi\mspace{1.5mu}]_{\frac{1}{2}}^2\sup_{(\vec{x},\,\vec{y})\in\,\mathbb{R}^6}\,\frac{\lvert\vec{x}-\vec{y}\rvert^4}{y^2+x^2+\vec{x}\!\cdot\!\vec{y}}\,\mathrm{K}_2\Big(\sqrt{\tfrac{4\lambda}{3}}\,\textstyle{\sqrt{y^2+x^2+\vec{x}\!\cdot\!\vec{y}}}\,\Big).
    \end{align*}
    We make use of~\eqref{upperBoundMacDonald} and get rid of the dependence on the angles in evaluating the $\sup$, since
    \begin{equation*}
    \begin{cases}
        x^2+y^2-2\,\vec{x}\!\cdot\!\vec{y}\leq 2\mspace{1.5mu}(x^2+y^2),\\
        x^2+y^2+\vec{x}\!\cdot\!\vec{y}\geq \frac{x^2+y^2}{2},
        \end{cases}\mspace{-18mu}\implies \frac{\lvert\vec{x}-\vec{y}\rvert^2}{x^2+y^2+\vec{x}\!\cdot\!\vec{y}}\leq 4.  
    \end{equation*}
    Hence,
    \begin{align*}
        \int_{\mathbb{R}^6}\mspace{-7.5mu}d\vec{x}d\vec{y}\;\frac{\lvert\varphi(\vec{y})-\varphi(\vec{x})\rvert^2\!\-}{y^2+x^2+\vec{x}\!\cdot\!\vec{y}}\,&\mathrm{K}_2\Big(\sqrt{\tfrac{4\lambda}{3}}\,\textstyle{\sqrt{y^2+x^2+\vec{x}\!\cdot\!\vec{y}}}\Big)\leq\\
        &\leq\frac{3}{2\mspace{1.5mu}\lambda}\,[\mspace{1.5mu}\varphi\mspace{1.5mu}]_{\frac{1}{2}}^2\sup_{(\vec{x},\,\vec{y})\in\,\mathbb{R}^6}\frac{\lvert\vec{x}-\vec{y}\rvert^4}{\!\left(y^2+x^2+\vec{x}\!\cdot\!\vec{y}\right)^{\!2}\!}=\frac{24}{\lambda}\,[\mspace{1.5mu}\varphi\mspace{1.5mu}]_{\frac{1}{2}}^2.
    \end{align*}
    We stress that this estimate is optimal, since the argument of the supremum in $\mathbb{R}^6$ attains the previous upper bound along the hyperplane $\vec{x}+\vec{y}=\vec{0}$.
    
    \noindent So far, we have obtained for any $\varphi\in H^{1/2}(\mathbb{R}^3)$
    \begin{equation}\label{nonSingularFinalEstimate}
        \Phi^\lambda_{\mathrm{off}}[\varphi]\leq - 24\pi\!\int_{\mathbb{R}^3}\mspace{-7.5mu}d\vec{y}\;\frac{\lvert\varphi(\vec{y})\rvert^2\!}{\lvert\vec{y}\rvert} \,e^{-\sqrt{\lambda}\,\lvert\vec{y}\rvert}+
        \frac{96\sqrt{3}}{\pi}\,[\mspace{1.5mu}\varphi\mspace{1.5mu}]_{\frac{1}{2}}^2.
    \end{equation}
    We complete the proof simply by neglecting the negative contribution.
    
    \end{proof}
\end{prop}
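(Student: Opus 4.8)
The plan is to read both inequalities directly off the position-space formulas of Propositions~\ref{PhiPositionProp} and~\ref{negativeQFContribution}, keeping $\Phi^\lambda_{\mathrm{diag}}$ untouched on each side, so that no new hard analysis is needed: the sole delicate ingredient — the explicit extraction of the negative Coulomb-type singularity of $\Phi^\lambda_{\mathrm{off}}$ — is already contained in Proposition~\ref{negativeQFContribution}. Recall from~\eqref{singularBehaviorExtracted} that
\[
\Phi^\lambda_{\mathrm{off}}[\varphi]=-24\pi\!\int_{\mathbb{R}^3}\!d\vec{x}\;\frac{e^{-\sqrt{\lambda}\,|\vec{x}|}}{|\vec{x}|}\,|\varphi(\vec{x})|^2+\mathcal{R}^\lambda[\varphi],\qquad\mathcal{R}^\lambda[\varphi]\ge 0,
\]
where $\mathcal{R}^\lambda$ is the manifestly nonnegative remainder appearing there, and that by~\eqref{regPhiPosition} one has $\Phi_{\mathrm{reg}}[\varphi]=12\pi\gamma\int_{\mathbb{R}^3}|\vec{x}|^{-1}|\varphi(\vec{x})|^2\,d\vec{x}\ge 0$ (the bounded term $\Phi_0$ is immaterial for these estimates). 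By density from $\mathcal{S}(\mathbb{R}^3)$ it suffices to argue for smooth $\varphi$, and in fact the identities used already hold on all of $H^{1/2}(\mathbb{R}^3)$.

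For the lower bound~\eqref{ThetaLowerBoundPosition} I would discard the nonnegative remainder $\mathcal{R}^\lambda$ and combine the two weighted $L^2$ integrals,
\[
\Phi^\lambda_{\mathrm{off}}[\varphi]+\Phi_{\mathrm{reg}}[\varphi]\ge 12\pi\!\int_{\mathbb{R}^3}\!d\vec{y}\;\frac{|\varphi(\vec{y})|^2}{|\vec{y}|}\bigl(\gamma-2e^{-\sqrt{\lambda}\,|\vec{y}|}\bigr).
\]
Since $e^{-\sqrt{\lambda}|\vec{y}|}\in(0,1]$, the bracket is at least $\min\{0,\gamma-2\}$, uniformly in $\lambda>0$. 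If $\gamma\ge 2$ the integral is nonnegative; if $\gamma<2$ one pulls the negative constant $\gamma-2$ out of the integral and uses the sharp Hardy-type inequality $\int_{\mathbb{R}^3}|\vec{x}|^{-1}|\varphi|^2\le(4\pi)^{-1}[\varphi]^2_{1/2}$ recorded in~\eqref{singularityIsGagliardo} (a consequence of Yafaev's inequality~\cite{Y} together with~\eqref{GagliardoHilbertSobolevFourier}). Multiplying by $12\pi$ yields precisely $3\min\{0,\gamma-2\}[\varphi]^2_{1/2}$, which is~\eqref{ThetaLowerBoundPosition}.

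For the upper bound~\eqref{ThetaUpperBoundPosition} I would instead drop the negative Coulomb term and bound $\mathcal{R}^\lambda$ from above. Replacing $\mathrm{K}_2$ by $2x^{-2}$ via~\eqref{upperBoundMacDonald} converts its kernel into $\tfrac{3}{2\lambda}(y^2+x^2+\vec{x}\cdot\vec{y})^{-2}$, and the scale-invariant elementary estimate $|\vec{x}-\vec{y}|^2\le 4(y^2+x^2+\vec{x}\cdot\vec{y})$ — attained on the hyperplane $\vec{x}+\vec{y}=\vec{0}$, hence $|\vec{x}-\vec{y}|^4(y^2+x^2+\vec{x}\cdot\vec{y})^{-2}\le 16$ — reveals the Gagliardo integrand $|\varphi(\vec{x})-\varphi(\vec{y})|^2|\vec{x}-\vec{y}|^{-4}$, so that $\mathcal{R}^\lambda[\varphi]\le\frac{4\sqrt{3}\,\lambda}{\pi}\cdot\frac{24}{\lambda}[\varphi]^2_{1/2}=\frac{96\sqrt{3}}{\pi}[\varphi]^2_{1/2}$, the powers of $\lambda$ cancelling. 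Bounding $\Phi_{\mathrm{reg}}$ from above by the same Hardy inequality gives $\Phi_{\mathrm{reg}}[\varphi]\le 3\gamma[\varphi]^2_{1/2}$, and summing the two contributions produces $\bigl(3\gamma+\tfrac{96\sqrt{3}}{\pi}\bigr)[\varphi]^2_{1/2}$, which is~\eqref{ThetaUpperBoundPosition}.

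I do not anticipate a real obstacle: the only genuinely delicate step is the extraction already performed in Proposition~\ref{negativeQFContribution}, and what remains is careful bookkeeping of the sharp constants — the Hardy constant $1/(4\pi)$, the angular supremum $|\vec{x}-\vec{y}|^4(y^2+x^2+\vec{x}\cdot\vec{y})^{-2}\le 16$, and the factor $3/(2\lambda)$ coming from $\mathrm{K}_2$ — and the verification that every power of $\lambda$ disappears, so that both inequalities hold uniformly in $\lambda>0$.
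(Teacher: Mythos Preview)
Your proposal is correct and follows essentially the same route as the paper: you invoke Proposition~\ref{negativeQFContribution} to split $\Phi^\lambda_{\mathrm{off}}$ into the negative Coulomb term and the nonnegative remainder, then use the Hardy--Rellich inequality~\eqref{singularityIsGagliardo} and the pointwise bounds $\mathrm{K}_2(x)\le 2/x^2$ and $|\vec{x}-\vec{y}|^4(y^2+x^2+\vec{x}\!\cdot\!\vec{y})^{-2}\le 16$ with exactly the same constants. Your explicit remark that $\Phi_0$ is set aside matches the paper's tacit treatment (it is reintroduced only after the proposition, in the coercivity discussion).
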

\noindent The major difficulties in the proof of the coercivity of $\Phi^\lambda$ in momentum space  obtained in~\cite{BCFT} lie in the search of a lower bound.
On the other hand, in position space such estimate, provided in  proposition~\ref{singularBehaviorPositionRep}, turns out to be much easier. 
However, some accuracy is lost in this framework.
Indeed, adopting~\eqref{ThetaLowerBoundPosition} to obtain an estimate from below for $\Phi^\lambda$, one gets
\begin{equation}
    \Phi^\lambda[\xi] \geq\Phi^\lambda_{\mathrm{diag}}[\xi]-3\max\{0,2-\gamma\}[\mspace{1.5mu}\xi\mspace{1.5mu}]^2_{\frac{1}{2}}+12\pi\,\mathop{\mathrm{essinf}}_{\vec{y}\,\in\,\mathbb{R}^3}\beta(\vec{y})\,\lVert \xi\rVert^2_{L^2(\mathbb{R}^3)}
\end{equation}
or, equivalently
\begin{equation}\label{toBeChosenPositive}
    \Phi^\lambda[\xi]\mspace{-1.5mu} \geq\!\int_{\mathbb{R}^3}\mspace{-7.5mu}d\vec{p}\left[12\pi\sqrt{\tfrac{3}{4}p^2+\lambda}-6\pi^2\max\{0,2\mspace{-1.5mu}-\mspace{-1.5mu}\gamma\}p+12\pi\,\mathop{\mathrm{essinf}}_{\vec{y}\,\in\,\mathbb{R}^3}\beta(\vec{y})\right]\mspace{-1.5mu}\lvert\xi(\vec{p})\rvert^2.
\end{equation}
The function in square brackets attains its minimum at
\begin{equation*}
    p_{\mathrm{min}}=\frac{\pi\max\{0,2\mspace{-1.5mu}-\mspace{-1.5mu}\gamma\}\sqrt{\lambda}}{\!\sqrt{9-3\pi^2\max\{0,2\mspace{-1.5mu}-\mspace{-1.5mu}\gamma\}^2}\,},
\end{equation*}
provided
\begin{equation}
    3-\pi^2\max\{0,2-\gamma\}^2> 0\iff \gamma> 2-\frac{\sqrt{3}}{\pi}=:\gamma_c'.
\end{equation}
Indeed, plugging the value of $p_{\mathrm{min}}$ in the right hand side of~\eqref{toBeChosenPositive}, one gets
\begin{equation}
    \Phi^\lambda[\xi]\mspace{-1.5mu} \geq \left(4\sqrt{3}\pi\mspace{1.5mu}\sqrt{\lambda}\sqrt{3-\pi^2\max\{0,2\mspace{-1.5mu}-\mspace{-1.5mu}\gamma\}^2}+12\pi\mathop{\mathrm{essinf}}_{\vec{y}\,\in\,\mathbb{R}^3}\beta(\vec{y})\right)\lVert \xi\rVert^2_{L^2(\mathbb{R}^3)}
\end{equation}
that is positive for
\begin{equation}
    \lambda>\frac{3\min\{0,\mathop{\mathrm{essinf}} \beta\}^2}{3-\pi^2\max\{0,2\mspace{-1.5mu}-\mspace{-1.5mu}\gamma\}^2}.
\end{equation}
As mentioned above, $\gamma'_c\approx 1.44867$ is not optimal, since $\gamma'_c\gneq\gamma_c\+$.
\vs\vs

\vs\vs
\end{document}